\pgfplotsset{compat=1.10}
\definecolor{Gray}{gray}{0.90}
\newtheorem{theorem}{\bf{Theorem}}[section]
\newtheorem{prop}[theorem]{\bf{Proposition}}
\newtheorem{observation}[theorem]{\bf{Observation}}
\newtheorem{remark}[theorem]{\bf{Remark}}
\newenvironment{definition}[1][Definition]{\begin{trivlist}
\item[\hskip \labelsep {\bfseries #1}]}{\end{trivlist}}
\def\qed{\hfill\rule[-1pt]{5pt}{5pt}\par\medskip}
\newcommand{\calN}[0]{\mathcal{N}}
\newcommand{\calD}[0]{\mathcal{D}}
\newcommand{\calM}[0]{\mathcal{M}}
\newcommand{\calC}[0]{\mathcal{C}}
\definecolor{blue}{rgb}{0,0,0}
\definecolor{amber}{rgb}{1.0, 0.75, 0.0}
\begin{document}
\title{Edge Augmentation with Controllability Constraints in\\ Directed Laplacian Networks} 
\author{Waseem Abbas, Mudassir Shabbir, Yasin~Yaz{\i}c{\i}o\u{g}lu and Xenofon Koutsoukos
\thanks{W. Abbas M~Shabbir and X. Koutsoukos are with the Electrical Engineering and Computer Science Department at Vanderbilt University, Nashville, TN, USA (Emails: \{\texttt{waseem.abbas,mudassir.shabbir, xenofon.koutsoukos\}@vanderbilt.edu)}. Y.~Yaz{\i}c{\i}o\u{g}lu is with the Department of Electrical and Computer Engineering at the University of Minnesota, Minneapolis, MN, USA (Email: \texttt{ayasin@umn.edu}).}
 }

\maketitle
\begin{abstract}
 In this paper, we study the maximum edge augmentation problem in directed Laplacian networks to improve their robustness while preserving lower bounds on their strong structural controllability (SSC). Since adding edges could adversely impact network controllability, the main objective is to maximally densify a given network by selectively adding missing edges while ensuring that SSC of the network does not deteriorate beyond certain levels specified by the SSC bounds. We consider two widely used bounds: first is based on the notion of zero forcing (ZF), and the second relies on the distances between nodes in a graph. We provide an edge augmentation algorithm that adds the maximum number of edges in a graph while preserving the ZF-based SSC bound, and also derive a closed-form expression for the exact number of edges added to the graph. Then, we examine the edge augmentation problem while preserving the distance-based bound and 
 present a randomized algorithm that guarantees an $\alpha$--approximate solution with high probability. Finally, we numerically evaluate and compare these edge augmentation solutions. 
\end{abstract}

\begin{keywords}
Edge augmentation, structural controllability, zero forcing, graph distances.
\end{keywords}

\section{Introduction}
In a networked multi-agent system, a frequent approach to \textcolor{blue}{improve} network connectivity is to systematically increase interconnections between agents. On the one hand, edge augmentation is useful for improving network connectivity, robustness to link failures, and resilience to malicious intrusions, but on the other hand, \textcolor{blue}{adding edges could adversely impact network controllability \cite{abbas2020trade,pasqualetti2018fragility,9086120,lindmark2020investigating}}. 
In this paper, we study the problem of maximum edge augmentation in a directed network of agents with Laplacian dynamics while preserving the controllability specification. 
We consider the network's strong structural controllability (SSC), which depends (apart from the set of input nodes) only on the structure of the underlying graph defined by the edge set of the graph. 
To measure how much of the network is strong structurally controllable with a given set of leader (input) nodes, the concept of the dimension of strong structurally controllable subspace (SSCS) is typically used (Section \ref{subsection:SSCS}). The exact computation of the dimension of SSCS is a hard task, so various graph-theoretic bounds on the dimension of SSCS have been proposed in the literature. We utilize two widely used bounds that are based on the ideas of \emph{Zero Forcing (ZF)} \cite{monshizadeh2014zero,monshizadeh2015strong,trefois2015zero,mousavi2018structural} and \emph{distances between nodes} in graphs \cite{yazicioglu2016graph,YasinCDC2020}. We discuss these bounds in detail in Sections~\ref{subsection:ZFS_bound} and~\ref{subsection:PMI_bound}, respectively. Our main objective is to \emph{add the maximum number of edges in a given directed graph while preserving the lower bound (ZF-based or distance-based) on the dimension of SSCS.} Our contributions are listed below.

\noindent
1) We present an optimal edge augmentation algorithm for adding the maximum number of edges in a directed graph while preserving the ZF-based bound on the dimension of SSCS. We analyze the algorithm and provide a closed-form expression for the number of edges added in the graph. 

 \noindent
2) We also discuss edge augmentation in graphs that preserves the distance-based bound on the dimension of SSCS.  For a given node pair $(u,v)$ in a directed graph, we characterize the optimal solution of the distance preserving edge augmentation problem in which the objective is to add maximum edges in a graph without changing the distance from node $u$ to node $v$.


\noindent
3) We then provide a randomized algorithm that adds maximal edges in a directed graph while preserving the distance-based bound on the dimension of SSCS. We also analyze the approximation ratio of the algorithm.

 \noindent
 4) Finally, we numerically evaluate and compare various edge augmentation solutions.

\textcolor{blue}{We studied the edge augmentation problem while preserving the distance-based bound on SSC in \emph{undirected} networks in \cite{AbbasACC2020}. In this paper, we focus on \emph{directed networks} and consider both the ZF-based bound and the distance-based bound on the dimension of SSCS. It is worth emphasizing that the edge  augmentation  problem  differs  fundamentally  between directed and undirected networks. This is due to the fact that adding an edge between nodes $u$ and $v$ in an undirected network means adding two directed edges (from $u$ to $v$ and $v$ to $u$) simultaneously. However, no such constraint exists when adding edges to directed networks.} This work is also related to \cite{9086120}, which only considers directed networks that are strong structurally controllable and studies the problem of adding edges while retaining their SSC. Our setup is more general because we consider directed networks that are not necessarily strong structurally controllable. \textcolor{blue}{We note that such a setup is very relevant to the notion of \emph{target controllability} in linear networks, where the goal is to control only a subset of agents (targets) instead of the entire network. Since controlling the entire network might not be required in certain applications and could be costly, it is desired to control only target nodes (for instance, \cite{Van2017distance,gao2014target,monshizadeh2015strong}). Moreover, this “partial controllability” specification could improve other network attributes, which could not be improved otherwise due to complete controllability requirements, for instance, robustness to external perturbations (which could be achieved through link additions).} 

\section{Preliminaries and Problem Description}
\label{sec:Prelim}
We consider a network of agents modeled by a directed graph $G = (V, E)$, where $V$ is the set of agents and $E$ is the set of directed edges. An edge from node $u$ to $v$ is denoted by $(u,v)$, and $u$ is the \emph{in-neighbor}, or simply the \emph{neighbor} of $v$. 
We use the terms node and agent interchangeably. The set of all neighbors of $u$, denoted by $\calN_u$, is called the \emph{neighborhood} of $u$. The \emph{distance} from $u$ to $v$ in $G$, denoted by $d_G(u,v)$, is the number of edges in the shortest directed path from $u$ to $v$. Accordingly, $d_G(u,u) = 0$ and $d_G(u,v)=\infty$ if there is no directed path from $u$ to $v$. We may ignore the subscript $G$ when it is clear from the context. The directed graph is \emph{strongly connected}
if there is a path from any node to any other node. The \emph{union} of $G=(V,E)$ and $G'= (V',E')$ is $G\cup G' = (V\cup V', E\cup E')$. The edges in a graph are assigned positive weights by some weighting function:
\begin{equation}
\label{eq:weights}
w: E\rightarrow \mathbb{R}^+,
\end{equation}
where $\mathbb{R}^+$ is the set of positive real numbers. 
\subsection{System Model}
Each agent $u$ in the network has a state $x_u\in\mathbb{R}$, and the overall state of the network is $x\in\mathbb{R}^n$, where $n = |V|$. The network dynamics are given by the following equation:
\begin{equation}
\label{eq:dynamics}
\dot{x} = -L_w x + B u,
\end{equation}
 where $L_w$ is the \emph{weighted Laplacian} matrix of $G$ and defined as
$L_w = (\text{Deg}-A_w)$. Here, $A_w\in\mathbb{R}^{n\times n}$ is the \emph{weighted adjacency} matrix of $G$ whose $uv^{th}$ entry is
\begin{equation}
\label{eq:A_matrix}
[A_w]_{u,v} = \left\{ \begin{array}{cc}
w(u, v) & \text{if } u\ne v, \text{ and } (u,v)\in E\\
0 & \text{otherwise,}
\end{array}\right.
\end{equation}
 and $\text{Deg}\in\mathbb{R}^{n\times n}$ is the \emph{degree} matrix defined as
 \begin{equation}
 [\text{Deg}]_{u,v} = \left\{ \begin{array}{cc}
 \sum_{k=1}^{n}[A_w]_{u,k} & \text{if } u = v,\\
 0 & \text{otherwise.}
 \end{array}
 \right.
 \end{equation}
In \eqref{eq:dynamics}, $B\in\mathbb{R}^{n\times m}$ is the input matrix, where $m$ is the number of inputs, which is equal to the number of leader nodes. If $V_\ell = \{\ell_1,\ell_2,\cdots, \ell_m\}\subset V$ is the set of leader nodes, then 
 \begin{equation}
 \label{eq:B}
 [B]_{u,v} = \left\{\begin{array}{cc}
 1 & \text{if node } u \text{ is also a leader } \ell_v\\
 0 & \text{otherwise.}
 \end{array}\right.  
 \end{equation}
 
\subsection{Strong Structural Controllability (SSC)}
\label{subsection:SSCS}
A state $x'\in\mathbb{R}^n$ is reachable if there is an input $u$ that can drive the network in \eqref{eq:dynamics} from origin (initial state) 
to $x'$ in a finite amount of time. A network $G = (V,E)$ with edge weights defined by $w$ and leader set $V_\ell$ is \emph{completely controllable}, that is every point in $\mathbb{R}^n$ is reachable, if and only if the following \emph{controllability matrix} is full rank.

\begin{equation*}
\small
\Gamma(L_w,V_\ell) = 
\left[
\begin{array}{ccccc}
B & (-L_w) B & (-L_w)^2 B & \cdots & (-L_w)^{n-1} B
\end{array}
\right].
\end{equation*}

The rank of $\Gamma(L_w,V_\ell)$ defines the \emph{dimension} of the controllable subspace consisting of all the reachable states. 
A Laplacian network $G = (V,E)$ with a given set of leader nodes is called \emph{strong structurally controllable (SSC)} if it is completely controllable for any choice of $w$ as in~\eqref{eq:weights}. 
At the same time, the \emph{dimension of strong structurally controllable subspace (SSCS)}, denoted by $\gamma(G,V_\ell)$, is the \emph{minimum rank} of the controllablility matrix $\Gamma(L_w,B)$ over all feasible $w$ (edge weights), i.e., 
\begin{equation}
\gamma(G,V_\ell) = \min\limits_w \;\text{rank}( \Gamma(L_w,V_\ell)).
\end{equation}

\subsection{Problem Formulation}
The main objective in the paper is to identify the maximum number of missing edges in a given network such that the dimension of SSCS of the network is preserved even after adding those edges. Since computing the dimension of SSCS is computationally challenging, we consider its lower bounds, including the zero forcing (ZF) and distance-based bounds (explained in Sections \ref{section:ZFS_strory} and \ref{section:PMI_Story}, respectively). These bounds are tight and have numerous applications~\cite{YasinCDC2020}. 
If $\delta$ is a (ZF-based or distance-based) lower bound  on the dimension of SSCS of the network, then the goal is to maximally densify the graph while maintaining the dimension of SSCS to be at least $\delta$. Formally, we state the problem below.

\textbf{Problem} Let $G = (V,E)$ be a directed network of agents in which $V_\ell\subseteq V$ be a set of leaders and the network dynamics are defined by \eqref{eq:dynamics}. Let the dimension of SSCS of the network be at least $\delta$. Then, find the maximum size edge set $E'$ such that $E\subseteq E'$ and the dimension of SSCS of the network $G' = (V,E')$ with the same set of leaders $V_\ell$ is also at least $\delta$, i.e., $\delta \le \gamma(G',V_\ell)$.
\section{Adding Edges through Zero Forcing Bound}
\label{section:ZFS_strory}
In this section, we present an edge augmentation algorithm that optimally adds edges in a network while preserving the zero forcing-based bound on the dimension of SSCS.
\subsection{Zero Forcing (ZF) Bound for SSC}
\label{subsection:ZFS_bound}
First, we explain the notion of \emph{Zero Forcing} process and its relation to the dimension of SSCS \cite{monshizadeh2014zero,monshizadeh2015strong,trefois2015zero}.

\begin{definition}{\em (Zero Forcing Process)}
Consider a directed graph $G=(V,E)$ such that each node $v\in V$ is initially assigned either a \textit{white} or \textit{black} color. The following coloring defines the zero forcing process:  \emph{if a black colored node $v\in V$ has exactly one white in-neighbor $u$, then change the color of $u$ to black.} We say that $v$ \emph{infected} $u$.\footnote{Since an edge $(u,v)$ indicates that the state of node $u$ is influenced by the state of $v$ in our system model (as in \eqref{eq:dynamics} and \eqref{eq:A_matrix}), we use in-neighbors in the ZF process for consistency. If an edge $(u,v)$ indicates that node $u$ influences node $v$'s state, then out-neighbors should be used in the ZF process, as done in some other works.} 
\end{definition}

\begin{definition}{\em (Derived Set)}
Consider a directed graph $G=(V,E)$ where $V'\subseteq V$ is an initial set of black nodes (also called the \textit{input set}), and apply the zero forcing process until no further color changes are possible. The resulting set of black nodes is the derived set, denoted by $\text{dset}(G,V') \subseteq V$. \textcolor{blue}{For a given set of input nodes, the derived set is unique \cite{work2008zero}}. Moreover, an input set $V'$ is called a \emph{zero forcing set} (ZFS) if $\text{dset}(G,V')=V$. 
\end{definition}



As an example, consider $G$ in Figure~\ref{fig:Derived}, where $V' = \{v_1,v_2\}$ is the set of input nodes. As a result of ZF process, $v_1$ infects $v_3$, $v_2$ infects $v_4$ and the resulting derived set is $\{v_1,v_2,v_3,v_4\}$. 

\begin{figure}[htb]
    \centering
    \begin{subfigure}[b]{0.239\textwidth}
	\centering
	\includegraphics[scale=0.22]{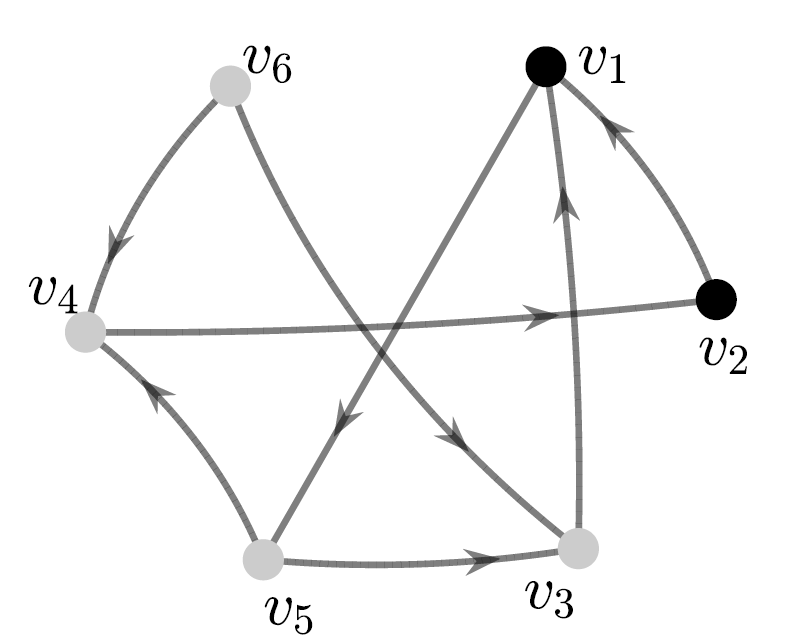}
	\caption{}
	\end{subfigure}
	\begin{subfigure}[b]{0.24\textwidth}
	\centering
\includegraphics[scale=0.33]{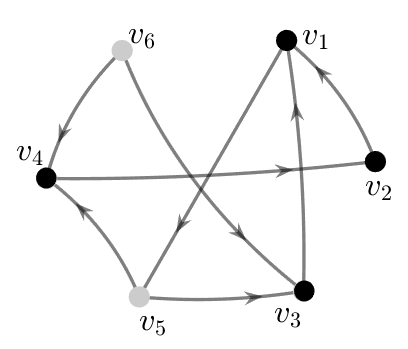}
	\caption{}
	\end{subfigure}
    \caption{An example of zero forcing and derived set.}
    \label{fig:Derived}
\end{figure}

In the context of SSC, the cardinality of the derived set is significant as it provides a lower bound on the dimension of SSCS, as stated in the following result.


\begin{theorem}\cite{monshizadeh2015strong}
For any network $G=(V,E)$ with the leaders $V_\ell \subseteq V$,
\begin{equation}
\label{eq:z_bound}
\zeta(G,V_\ell) \le \gamma(G,V_\ell),
\end{equation}
where $\zeta(G,V_\ell)= |\text{dset}(G,V_\ell)|$ is the size of the derived set corresponding to the input set $V_\ell$.
\end{theorem}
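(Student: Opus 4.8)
The plan is to recast the bound $\zeta(G,V_\ell)\le\gamma(G,V_\ell)$ as a statement about the left kernel of the controllability matrix, and then to show that the zero forcing process literally describes which coordinates of such a kernel vector must be zero. Fix an arbitrary admissible weighting $w$ and let $N_w=\{z\in\mathbb{R}^n : z^\top\Gamma(L_w,V_\ell)=0\}$ be the orthogonal complement of the controllable subspace, so that $\mathrm{rank}\,\Gamma(L_w,V_\ell)=n-\dim N_w$. By the Cayley--Hamilton theorem, $N_w=\{z : z^\top(-L_w)^k B=0\ \text{for all}\ k\ge 0\}$, and since the columns of $B$ are the standard basis vectors indexed by $V_\ell$, this is the same as $N_w=\{z : \bigl((L_w^\top)^k z\bigr)_\ell=0\ \text{for all}\ \ell\in V_\ell\ \text{and all}\ k\ge 0\}$; a one-line check shows $N_w$ is invariant under $L_w^\top$. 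If every $z\in N_w$ can be shown to vanish on $\mathrm{dset}(G,V_\ell)$, then $N_w$ lies inside the $(n-\zeta(G,V_\ell))$-dimensional coordinate subspace of vectors supported off the derived set, so $\dim N_w\le n-\zeta(G,V_\ell)$ and hence $\mathrm{rank}\,\Gamma(L_w,V_\ell)\ge\zeta(G,V_\ell)$; minimizing over $w$ then gives $\gamma(G,V_\ell)\ge\zeta(G,V_\ell)$.

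To prove that every $z\in N_w$ vanishes on the derived set, I would first fix an ordering $v_1,\dots,v_\zeta$ of $\mathrm{dset}(G,V_\ell)$ obtained by running the zero forcing process one infection at a time: $v_1,\dots,v_m$ are the leaders, and for $s>m$ the node $v_s$ is infected by some $v_{t(s)}$ with $t(s)<s$ whose unique white in-neighbor at that moment is $v_s$, so that $\calN_{v_{t(s)}}\subseteq\{v_1,\dots,v_s\}$ and $v_s\in\calN_{v_{t(s)}}$. Then I would prove by induction on $s$ the strengthened statement that for every $z\in N_w$ and every $k\ge 0$ one has $\bigl((L_w^\top)^k z\bigr)_{v_s}=0$. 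The base case $s\le m$ is exactly the defining property of $N_w$. For the inductive step, put $y=(L_w^\top)^k z$; invariance of $N_w$ gives $y\in N_w$ and $(L_w^\top)^{k'}y=(L_w^\top)^{k+k'}z$, so the induction hypothesis yields $\bigl((L_w^\top)^{k'}y\bigr)_{v_{s'}}=0$ for all $s'<s$ and all $k'\ge 0$. Reading off the $v_{t(s)}$-th component of the identity $(L_w^\top y)_{v_{t(s)}}=0$ (the case $s'=t(s)$, $k'=1$), using $y_{v_{t(s)}}=0$ (the case $s'=t(s)$, $k'=0$) to kill the diagonal term, and using $y_{v_{s'}}=0$ for the remaining in-neighbors $v_{s'}\in\calN_{v_{t(s)}}$ with $s'<s$, collapses the identity to $w(v_s,v_{t(s)})\,y_{v_s}=0$; since edge weights are strictly positive, $y_{v_s}=0$. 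Specializing to $k=0$ then shows every $z\in N_w$ vanishes on $\mathrm{dset}(G,V_\ell)$, which completes the argument.

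The main obstacle, and the reason for strengthening the induction hypothesis to the whole tower $\{(L_w^\top)^k z\}_{k\ge 0}$ rather than to $z$ alone, is that membership in $N_w$ only forces $(L_w^\top)^k z$ to vanish at the \emph{leaders}, whereas the inductive step needs a vanishing identity read off at the (typically non-leader) infecting node $v_{t(s)}$. Carrying the extra derivatives makes these identities propagate outward from $V_\ell$ in exact lockstep with the zero forcing process: for the node infected at step $s$, the $k$-th iterate $(L_w^\top)^k z$ plays the role that $z$ itself played for the nodes infected earlier. I expect the only genuinely delicate points to be the combinatorial bookkeeping that justifies $\calN_{v_{t(s)}}\subseteq\{v_1,\dots,v_s\}$ with $v_s\in\calN_{v_{t(s)}}$ for a one-at-a-time infection order, together with the observation that the final scalar equation $w(v_s,v_{t(s)})\,y_{v_s}=0$ remains non-degenerate for \emph{every} admissible weighting; the rest is routine linear algebra and Cayley--Hamilton.
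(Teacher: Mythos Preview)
The paper does not give its own proof of this theorem; it is quoted from \cite{monshizadeh2015strong} and used as a black box, so there is nothing in the paper to compare your argument against directly.

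That said, your proposal is correct and is essentially the standard argument behind this result. The key idea---that the left kernel $N_w$ of the controllability matrix is the largest $L_w^\top$-invariant subspace annihilating the leader coordinates, and that the zero forcing rule forces such a vector to vanish coordinate by coordinate along the derived set---is exactly how the bound is obtained in the literature. Your computation of $(L_w^\top y)_{v_{t(s)}}$ correctly uses that the in-neighborhood $\calN_{v_{t(s)}}$ lies in $\{v_1,\dots,v_s\}$ with $v_s$ the unique ``white'' in-neighbor, so all but the $w(v_s,v_{t(s)})\,y_{v_s}$ term drop out; positivity of the weights then gives $y_{v_s}=0$. The diagonal entry of $L_w$ (an out-degree here) is irrelevant since it is multiplied by $y_{v_{t(s)}}=0$.

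One minor simplification: your strengthened induction hypothesis over all powers $k\ge 0$ is not really needed once you have $L_w^\top$-invariance of $N_w$. If the hypothesis at step $s$ is ``$z'_{v_{s'}}=0$ for every $z'\in N_w$ and every $s'<s$'', then applying it to $z'=L_w^\top z\in N_w$ already yields $(L_w^\top z)_{v_{t(s)}}=0$, and applying it to $z'=z$ gives the vanishing of $z$ on the earlier in-neighbors. The two formulations are equivalent, but the simpler one avoids carrying the tower of iterates explicitly.
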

\subsection{Edge Augmentation Algorithm Using ZF}
We provide an algorithm to add edges to a directed graph with a given set of leaders. The algorithm ensures that the derived set of the graph remains the same after adding edges, thus, preserving the ZF-based bound on the dimension of SSCS. The proposed algorithm is a modification of the ZF process. In summary, we look for a BLACK node with a single WHITE in-neighbor, add  \textcolor{blue}{edges incident to the BLACK node that do not change change the size of the derived set}, change the WHITE node's color, and then repeat the same procedure until there is no BLACK node with a single WHITE in-neighbor. When this process concludes, we add any \textcolor{blue}{extra edges that can be added while preserving the derived set}. The algorithm is outlined below. We denote the color of the node $v$ by COLOR($v$).
\begin{algorithm}[!ht]
	\caption{ZF-based Edge Augmentation}
	\label{algo:augmention_zfs}
	{\small
	\begin{algorithmic}
     	\State \textbf{Given} $G= (V,E)$, $V_\ell$ \Comment{$V_\ell$ is a leader set}.
		\State \textbf{Initialize} $G' = (V,E')$, $E'\gets E$
		\State For all $v\in V$, $\text{COLOR}(v)\gets \text{WHITE}$
		\State For all $v\in V_\ell$, $\text{COLOR}(v)\gets \text{BLACK}$
		\While{$\exists u$ a BLACK node with a single WHITE in-neighbor $v$}
		\State $\text{COLOR}(v)\gets \text{BLACK}$
		\State For all $w\in V$, with $\text{COLOR}(w) = \text{BLACK}$, 
		\State set $E' \gets E' \cup \{(w,u)\}$ \Comment{Add edges to $u$ from all BLACK nodes.}
		\EndWhile
		\State For all nodes $u$ not considered in the loop above,
		\State set $E' \gets E' \cup \{(w,u): w\in V\}$ \Comment{Add edges to $u$ from all nodes.}
		\Return $G' = (V,E')$.
	\end{algorithmic}
	}
\end{algorithm}

Figure \ref{fig:ZFS_Augmentation_Algo} illustrates an example of edge augmentation due to Algorithm \ref{algo:augmention_zfs}. The derived set in $G$ and $G'$ is same.
\begin{figure}[htb]
    \centering
    \begin{subfigure}[b]{0.239\textwidth}
	\centering
	\includegraphics[scale=0.22]{Main_example.png}
	\caption{$G$}
	\end{subfigure}
	\begin{subfigure}[b]{0.22\textwidth}
	\centering
\includegraphics[scale=0.22]{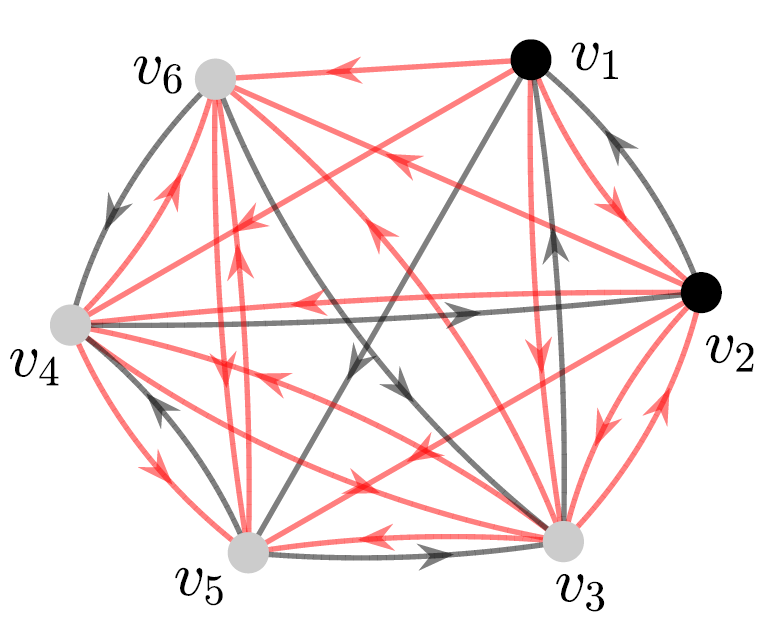}
	\caption{$G'$}
	\end{subfigure}
	\caption{Example of edge augmentation due to Algorithm \ref{algo:augmention_zfs}}.
    \label{fig:ZFS_Augmentation_Algo}
\end{figure}
Next, we show that Algorithm~\ref{algo:augmention_zfs} is optimal and adds the maximum number of edges while preserving the size of the derived set returned by the ZF process. 

\begin{prop}
For a directed graph $G=(V,E)$ with a leader set $V_\ell\subseteq V$, let $G'=(V,E')$ be a graph returned by Algorithm~\ref{algo:augmention_zfs}. Then, $\text{dset}(G,V_\ell)  = \text{dset}(G',V_\ell)$.
\end{prop}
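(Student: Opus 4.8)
The plan is to show that running Algorithm~\ref{algo:augmention_zfs} produces a graph $G'$ on which the zero forcing process, started from the same leader set $V_\ell$, produces exactly the same derived set as on $G$. I will argue this by establishing two containments: $\text{dset}(G,V_\ell) \subseteq \text{dset}(G',V_\ell)$ and $\text{dset}(G',V_\ell) \subseteq \text{dset}(G,V_\ell)$.

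For the first containment, the key observation is that $G'$ is obtained from $G$ only by \emph{adding} edges, and adding edges can only expand the derived set, never shrink it: any infection step valid in $G$ (a black node $u$ with a unique white in-neighbor $v$) remains available in a supergraph as long as $u$ has not meanwhile picked up other infections — and since the derived set only grows, this monotonicity can be made precise by a straightforward induction on the order in which nodes get colored in the ZF process on $G$. So $\text{dset}(G,V_\ell) \subseteq \text{dset}(G',V_\ell)$ is the easy direction.

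The substantive direction is $\text{dset}(G',V_\ell) \subseteq \text{dset}(G,V_\ell)$, i.e., the added edges do not let the ZF process on $G'$ color any node that was white at the end of the ZF process on $G$. Here I would track the algorithm's own execution. The algorithm maintains a coloring that mimics a ZF run on $G$: at the moment it processes a black node $u$ with a single white in-neighbor $v$, it colors $v$ black and then adds edges into $u$ only from nodes that are \emph{already black} at that point. Such edges are harmless for the ZF process because an in-edge into an already-black node $u$ never triggers a new infection of $u$, and it is also harmless with respect to preserving the forcing that colored $v$: the newly added in-neighbors of $u$ are black, so $v$ remains the unique white in-neighbor of $u$ at the relevant moment. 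After the while-loop terminates, the black set is exactly $\text{dset}(G,V_\ell)$ (the loop faithfully reproduces a maximal ZF run), and the remaining white nodes $u$ are precisely those outside the derived set; to each such $u$ the algorithm adds all possible in-edges. I must check this last batch is also safe: a white node $u \notin \text{dset}(G,V_\ell)$, by definition of the derived set, is never forced in $G$, which (for the standard ZF process) means every black node always has either zero or at least two white in-neighbors among $V \setminus \text{dset}(G,V_\ell)$-type obstructions — more carefully, I would invoke the well-known fact that $\text{dset}(G,V_\ell)$ is a stalled set, meaning no black node in it has a unique white in-neighbor, and argue that adding in-edges to $u$ from arbitrary nodes (black or white) cannot create, for any black node $w$, a situation where $w$ acquires a unique white in-neighbor, because such new edges point \emph{into} the white nodes $u$, not out of them, so they do not change the white in-neighbor count of any node. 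Hence no black node ever gains a forcing opportunity it lacked in $G$, and the ZF process on $G'$ stalls at the same black set.

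The main obstacle I anticipate is bookkeeping around the \emph{order} of events: the ZF process is confluent (the derived set is independent of the order of forces, by \cite{work2008zero}), but the algorithm fixes a particular order, and I need to argue that the edges added "in the middle" of that order do not retroactively enable a force that the order-independence guarantee would then propagate. The clean way to handle this is to observe that \emph{every} edge the algorithm adds is either (i) an in-edge to a node that is black at the time of addition — such edges are inert for ZF since they can only ever fail to be "the unique white in-neighbor" — or (ii) an in-edge to a node in $V \setminus \text{dset}(G,V_\ell)$ — such edges change only the in-neighbor sets of non-derived nodes and hence cannot affect whether any node has a \emph{unique white in-neighbor} unless that node is itself non-derived, in which case it is white and irrelevant to triggering forces from black nodes. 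Making case (ii) airtight — precisely, that no cascade of forces among non-derived nodes can start — is where I would spend the most care, and I expect it reduces to the characterization of the derived set as the (unique) maximal stalled set containing $V_\ell$.
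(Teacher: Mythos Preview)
Your first containment rests on a false premise: adding edges to a directed graph does \emph{not} in general enlarge the derived set. A single new edge $(w,u)$ from a white node $w$ into a black node $u$ can give $u$ a second white in-neighbor and kill the force $u$ was about to perform. Concretely, take $V=\{a,b,c\}$, $E=\{(b,a)\}$, leader set $\{a\}$; then $a$ forces $b$ and $\text{dset}=\{a,b\}$. Add the edge $(c,a)$: now $a$ has two white in-neighbors and forces nothing, so $\text{dset}=\{a\}$. So the ``easy direction'' cannot be dispatched by a generic monotonicity claim; it requires knowing exactly which edges the algorithm adds.

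The paper's proof does exactly that, and nothing more: it tracks the algorithm iteration by iteration and observes that when black $u$ is about to force its unique white in-neighbor $v$, the only edges the algorithm inserts into $u$ come from nodes that are already black, so $v$ remains $u$'s unique white in-neighbor in $G'$ and the same force goes through. This is precisely the observation you make inside your case~(i) discussion (``the newly added in-neighbors of $u$ are black, so $v$ remains the unique white in-neighbor of $u$''), but you file it under the \emph{second} containment, when in fact it is the correct argument for the first. The paper then simply concludes that the ZF process runs identically on $G$ and $G'$; it does not separately argue your ``substantive'' direction at all.

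There is also a misreading in your case~(ii). The nodes ``not considered in the loop'' are not just the white nodes outside $\Delta$; they are all nodes that never played the role of the forcing node $u$ in some iteration. Since there are exactly $|\Delta|-m$ iterations, this leaves $n-|\Delta|+m$ nodes unconsidered (cf.\ the edge count in Proposition~\ref{prop:graph_size_zfs}), including $m$ \emph{black} nodes in $\Delta$ that never forced. The algorithm adds \emph{all} in-edges to these black nodes too, including edges from white nodes, so your claim that the post-loop edges ``change only the in-neighbor sets of non-derived nodes'' is not what the algorithm does, and the stalled-set argument you sketch does not cover this case as written.
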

\begin{proof}
Let $\Delta$ be the derived set of $G$ with a leader set $V_\ell$. In $G'$, all leader nodes in $V_\ell$ are colored BLACK due to the initial condition of the ZF process. Consider an arbitrary iteration in the ZF process, where a BLACK node $u$ colors its {\em only} WHITE in-neighbor $v$ BLACK. Algorithm~\ref{algo:augmention_zfs} adds edges in $G'$ from (currently) BLACK nodes to $u$. Since no edge from a currently WHITE node to $u$ is added, $v$ must be the only WHITE in-neighbor of $u$ in $G'$ as well. Thus, the ZF process proceeds by assigning the BLACK color to node $v$, which is the only WHITE in-neighbor of the BLACK node $u$ in $G'$. This holds for every iteration in the ZF process. Thus, the ZF process in $G$ and $G'$ will change the colors of nodes exactly the same way. 
\end{proof}

We can count the number of edges in the directed graph $G'= (V,E')$ returned by Algorithm~\ref{algo:augmention_zfs} in terms of $|V|$, $|V_\ell|$ and the size of the derived set. 

\begin{prop}
\label{prop:graph_size_zfs}
For a graph $G'=(V,E')$ returned by Algorithm~\ref{algo:augmention_zfs} with a leader set $V_\ell$ and derived set $\Delta$, 
\begin{equation*}
\begin{split}
|E'| & \ge \frac{|\Delta| (|\Delta|+1)}{2} - \frac{m (m+1)}{2} + (m+n-|\Delta|)n - n,
\end{split}
\end{equation*}
where $n = |V|$ and $m = |V_\ell|$.
\end{prop}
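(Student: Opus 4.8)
The plan is to count the edges of $G'$ by computing, for every vertex, its number of in-neighbors, and to partition $V$ into the vertices that get selected as the ``BLACK node $u$'' in some iteration of the loop of Algorithm~\ref{algo:augmention_zfs} and those that never do. First I would pin down the structure of the loop: each iteration colors exactly one previously WHITE vertex BLACK, and the count of BLACK vertices starts at $m=|V_\ell|$ and ends at $|\Delta|$, so the loop runs exactly $|\Delta|-m$ iterations. The selected vertices are moreover pairwise distinct, since once the unique WHITE in-neighbor of a selected vertex turns BLACK that vertex has no WHITE in-neighbor, and the algorithm never recolors a vertex WHITE nor adds an in-edge from a WHITE vertex, so it cannot be selected again. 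Write $u_1,\dots,u_{|\Delta|-m}$ for these vertices (each in $\Delta$) and $v_i$ for the vertex colored at iteration $i$.

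Next I would determine the number of in-neighbors of each vertex in $G'$. At the start of iteration $i$ there are $m+i-1$ BLACK vertices, hence $m+i$ after $v_i$ is colored, and the algorithm inserts an edge into $u_i$ from each of them. No other iteration and no post-loop step adds an in-edge to $u_i$ (iteration $j\neq i$ only touches $u_j$, and the post-loop step only touches vertices never selected in the loop), so the in-neighborhood of $u_i$ in $G'$ equals its in-neighborhood in $G$ together with these $m+i$ vertices; and by the rule under which $u_i$ was selected, every in-neighbor of $u_i$ in $G$ is either BLACK at iteration $i$ or equal to $v_i$, hence already among those $m+i$ vertices. So the in-neighborhood of $u_i$ in $G'$ is exactly that set of $m+i$ vertices, one of which is $u_i$ itself; thus $u_i$ has precisely $m+i-1$ in-neighbors besides itself. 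Each of the remaining $n-(|\Delta|-m)=m+n-|\Delta|$ vertices receives, in the post-loop step, an in-edge from every vertex of $V$, and so has $n-1$ in-neighbors besides itself.

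Summing these in-degrees over all vertices (and not counting self-loops),
\begin{equation*}
|E'| \;=\; \sum_{i=1}^{|\Delta|-m}(m+i-1)\;+\;(m+n-|\Delta|)(n-1),
\end{equation*}
where the sum is the arithmetic progression $m+(m+1)+\cdots+(|\Delta|-1)=\tfrac{|\Delta|(|\Delta|-1)}{2}-\tfrac{m(m-1)}{2}$. Substituting $\tfrac{|\Delta|(|\Delta|-1)}{2}=\tfrac{|\Delta|(|\Delta|+1)}{2}-|\Delta|$, $\tfrac{m(m-1)}{2}=\tfrac{m(m+1)}{2}-m$, and $(m+n-|\Delta|)(n-1)=(m+n-|\Delta|)n-(m+n-|\Delta|)$ makes the leftover constants collapse to $-n$, which is exactly the claimed expression; so the bound holds, in fact with equality under this count (counting also the self-loops the algorithm writes would add a further $+n$, and the stated $\ge$ still holds). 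The only genuinely delicate step is the containment claim for the in-neighborhood of $u_i$, but this is precisely the defining property of the vertex selected at iteration $i$, so I expect the obstacle to be mild.
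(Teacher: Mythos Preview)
Your proof is correct and follows essentially the same approach as the paper's: both count, for each vertex, the in-edges guaranteed by the algorithm, splitting into the $|\Delta|-m$ vertices selected in the loop (contributing $m,m+1,\dots,|\Delta|-1$) and the $m+n-|\Delta|$ remaining vertices (contributing $n-1$ each), then sum. You supply additional care the paper omits—namely the distinctness of the selected $u_i$, the containment of $u_i$'s original in-neighborhood in the BLACK set, and the self-loop accounting—but the underlying decomposition and arithmetic are identical.
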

\begin{proof}
In each iteration of the WHILE loop in Algorithm~\ref{algo:augmention_zfs}, edges from all BLACK nodes to a fixed node are added. There are $|V_\ell| = m$ BLACK nodes when the WHILE loop starts, and this number increases by one in each iteration. Thus, we add $m, m +1,\ldots,|\Delta|-1$ edges in the WHILE loop. Outside the loop, we add $n-1$ incoming edges for each of the remaining $n-|\Delta|+m$ nodes. Therefore,

{\small
\begin{equation*}
\begin{split}
|E'| \ge & \; (n-1) (n-|\Delta|+ m) + \sum_{i=m}^{|\Delta|-1} i\\
= & \;\frac{|\Delta| (|\Delta|+1)}{2} - \frac{m (m+1)}{2} + (m+n-|\Delta|)n - n.
\end{split}
\end{equation*}
}
\end{proof}

\begin{theorem}
\textcolor{blue}{Let $G=(V,E)$ be a directed graph with $|V|=n$ nodes, leader set $V_\ell$ and derived set $\Delta = \text{dset}(G,V_\ell)$, then Algorithm~\ref{algo:augmention_zfs} returns a graph $G^*=(V,E^*)$ where $E^*\supseteq E$ and $|E^*|$ is maximum while preserving the size of the derived set $\Delta$.}
 Moreover, the number of edges in the optimal graph is
\begin{equation*}
 \frac{|\Delta| (|\Delta|+1)}{2} - \frac{m (m+1)}{2} + (m+n-|\Delta|)n - n,
\end{equation*}
where $m = |V_\ell|.$
\end{theorem}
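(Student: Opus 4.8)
The plan is to combine the proposition guaranteeing that Algorithm~\ref{algo:augmention_zfs} preserves the derived set — so that its output $G^*=(V,E^*)$ is a feasible solution, $\text{dset}(G^*,V_\ell)=\Delta$ — with a matching upper bound on the number of edges valid for \emph{every} feasible solution. Throughout I would write $n=|V|$, $m=|V_\ell|$ and $k=|\Delta|$, and recall that $V_\ell\subseteq\Delta$, so reaching $\Delta$ from the initial black set $V_\ell$ takes exactly $k-m$ forcing steps. Since Proposition~\ref{prop:graph_size_zfs} already gives $|E^*|\ge B$ with $B:=\frac{k(k+1)}{2}-\frac{m(m+1)}{2}+(m+n-k)n-n$, it suffices to show that any directed graph $G''=(V,E'')$ with $E\subseteq E''$ and $|\text{dset}(G'',V_\ell)|=|\Delta|$ satisfies $|E''|\le B$. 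Applying this to $G^*$ itself then forces $|E^*|=B$ and simultaneously shows that $|E^*|$ is maximum.

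To obtain the bound, I would run the zero forcing process on $G''$ and fix any valid ordering of its color changes, $V_\ell=W_0\subsetneq W_1\subsetneq\cdots\subsetneq W_{k-m}=\text{dset}(G'',V_\ell)$, where at step $t$ some black node $u_t\in W_{t-1}$ forces its unique white in-neighbor $v_t$, so $W_t=W_{t-1}\cup\{v_t\}$ and $|W_t|=m+t$. The crucial observation is that, since $v_t$ is the \emph{only} in-neighbor of $u_t$ lying outside $W_{t-1}$, all in-neighbors of $u_t$ in $G''$ belong to $W_{t-1}\cup\{v_t\}=W_t$; as $G''$ has no self-loops and $u_t\in W_t$, this forces $d^{\text{in}}_{G''}(u_t)\le |W_t|-1=m+t-1$. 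Moreover the forcing nodes $u_1,\dots,u_{k-m}$ are pairwise distinct, since once $u_t$ has performed its forcing step all of its in-neighbors are black and it can never force again. For each of the remaining $n-(k-m)$ vertices the only available bound is the trivial $d^{\text{in}}_{G''}(\cdot)\le n-1$. Summing in-degrees over all vertices,
\[
|E''|=\sum_{v\in V}d^{\text{in}}_{G''}(v)\;\le\;\sum_{t=1}^{k-m}(m+t-1)+\bigl(n-k+m\bigr)(n-1),
\]
and the right-hand side equals $B$ by exactly the rearrangement carried out in the proof of Proposition~\ref{prop:graph_size_zfs}.

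I expect the only genuine subtlety to lie in this second step: a feasible $G''$ may realize a forcing sequence entirely different from the one used by Algorithm~\ref{algo:augmention_zfs}, so one cannot argue edge-by-edge that $E''\subseteq E^*$. What rescues the argument is that the in-degree cap $d^{\text{in}}_{G''}(u_t)\le m+t-1$ holds for whichever forcing sequence $G''$ admits, that these caps are applied to $k-m$ \emph{distinct} vertices, and that the resulting multiset of caps $\{m,m+1,\dots,k-1\}\cup\{n-1,\dots,n-1\}$ does not depend on the sequence — hence the cardinality estimate is insensitive to it. (In fact the same computation gives $|E''|\le B$ even under the weaker hypothesis $|\text{dset}(G'',V_\ell)|\ge|\Delta|$, since enlarging the derived set only sharpens the caps; so Algorithm~\ref{algo:augmention_zfs} is optimal even against every graph that merely preserves the zero forcing lower bound.) Finally, the stated closed form is just the identity $B=\sum_{i=m}^{k-1} i+(n-1)(n-k+m)$, which I would not belabor.
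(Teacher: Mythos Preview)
Your proof is correct and follows essentially the same approach as the paper's: both arguments fix a forcing sequence in the feasible graph and exploit that each forcing vertex $u_t$ can have in-neighbors only among the currently black nodes, with you summing the resulting in-degree caps directly while the paper equivalently counts the complementary missing edges and subtracts from $n^2-n$. Your explicit justification that the forcing vertices $u_1,\dots,u_{k-m}$ are pairwise distinct is a point the paper leaves somewhat implicit, and your parenthetical extension to the case $|\text{dset}(G'',V_\ell)|\ge|\Delta|$ is a nice bonus.
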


\begin{proof}
\textcolor{blue}{
Let $G^*=(V,E^*)$ be a graph satisfying the conditions stated in the theorem, and $G'=(V,E')$ be a graph returned by Algorithm~\ref{algo:augmention_zfs}.} We will count the number of edges in $G^*$ and show that $|E^*|$ is upper bound by the expression in Proposition~\ref{prop:graph_size_zfs}. Clearly $|E'|$ can not be larger than $|E^*|$, we will get the desired result.

Since $G^*$ preserves the size of the derived set, we should be able to run $|\Delta|- |V_\ell|$ iterations of the ZF process in some arbitrary order. When the ZF process starts, there are $ m$ BLACK nodes and $(n-m)$ WHITE nodes. At this point, there must exist a BLACK node $u$ which has only one in-coming edge from a WHITE in-neighbor. Therefore, at least $n-m-1$ edges of the form $(v,u)$, where $v$ is a WHITE node, are missing from $G^*$. In each iteration, the number of WHITE nodes decreases by exactly one. This means that in the second iteration, at least $(n-m-2)$ edges are missing, and these edges are distinct from previously counted edges because none of these involve the node $u$. Similarly, in iteration $i$, there are at least $(n-m-i)$ distinct edges missing. We can upper bound the number of edges in $G^*$ by subtarting the minimum number of missing edges in the graph from the maximum possible $n^2-n$ edges. Thus,

{\small
 \begin{equation*}
 \begin{split}
 |E^*| \le & \; n^2 - n - \sum_{i=1}^{|\Delta|-m} (n-m-i)
 =   \; n^2 - n - \sum_{ i = n-m- (|\Delta|-m)}^{n-m-1} i\\
 = & \; n^2 - n - \frac{(n-m-1)(n-m)}{2}
 +  \frac{(n-|\Delta|)(n-|\Delta|-1)}{2}.
 \end{split}
 \end{equation*}
}
Using Proposition \ref{prop:graph_size_zfs}, we get $|E^*|\le |E'|$. However, $|E^*|$ is optimal, which means $|E^*|\ge |E'|$. Thus, we deduce that $|E^*|=|E'|$ and conclude the desired statement.
\end{proof}
\textcolor{blue}{The time complexity of Algorithm \ref{algo:augmention_zfs} is $O(n^2)$ where $n$ is the number of nodes in the graph. Note that the main task in the algorithm is the addition of edges to the edge set $E'$. If the edges are kept in an adjacency matrix, each addition takes a constant amount of time. We do not add an edge twice to this set, so the time complexity is bounded by the maximum number of possible edges in a directed graph. We also observe that the time complexity is:
  $$
  \Omega(|E'|) = \Omega( {|\Delta|\choose{2}} + (m+n-|\Delta|)n ).
  $$
When the size of derived set is at least $n/2$, the first term is $\Omega(n^2)$, and when it is less than $n/2$, the second term is $\Omega(n^2)$. Thus, the time complexity of Algorithm \ref{algo:augmention_zfs} is $\Theta(n^2)$.} 

\begin{remark} An important observation here is that the number \textcolor{blue}{of edges that one can add to a directed graph} while preserving the derived set is independent of the topology of the given graph. 
Note, however, that \textcolor{blue}{the size of the derived set} in an arbitrary graph is not independent of the topology.
\end{remark}
\section{Adding Edges through Distance-based Bound}
\label{section:PMI_Story}
In this section, first, we review a tight lower bound on the dimension SSCS based on the distances of nodes to leaders in a graph \cite{yazicioglu2016graph}. Second, we present a method to add edges while preserving distances between specific node pairs, which also preserves the bound on the dimension of SSCS. The distance-based bound on the dimension of SSCS is typically better than the ZF-based bound, especially when the network is not strong structurally controllable \cite{YasinCDC2020}. 
\subsection{Distance-based Bound for SSC}
\label{subsection:PMI_bound}
Given a network with $m$ leaders $V_\ell = \{\ell_1,\cdots,\ell_m\}$, we define the \emph{distance-to-leaders} (DL) vector of each $v_i\in V$ as
\begin{equation*}
\label{eq:DLvector}
D_i = \left[
\begin{array}{ccccc}
d(v_i, \ell_1) & d(v_i,\ell_2)  & \cdots & d(v_i,\ell_m)
\end{array}
\right]^T \in \mathbb{Z}^m.
\end{equation*}
The $j^{th}$ component of $D_i$, denoted by $[D_i]_j$, is equal to the distance of $v_i$ to $\ell_j$. Next, we provide the definition of  \emph{pseudo-monotonically increasing} sequences of DL vectors.

\begin{definition}{(\emph{Pseudo-monotonically Increasing (PMI) Sequence)}}
\label{def:PMI}
A sequence of distance-to-leaders vectors $\calD$ is PMI if for every $i^{th}$ vector in the sequence, denoted by $\calD_i$, there exists some $\pi(i)\in\{1,2,\cdots,m\}$ such that
\begin{equation}
\label{eq:PMIcondition}
[\calD_i]_{\pi(i)} < [\calD_{j}]_{\pi(i)}, \;\;\forall j > i.
\end{equation}
\textcolor{blue}{In other words, \eqref{eq:PMIcondition} needs to be satisfied for all the subsequent distance-to-leader vectors $\mathcal{D}_j$ appearing after $\mathcal{D}_i$ in the sequence.} We say that $\calD_i$ satisfies the \emph{PMI property} at coordinate $\pi(i)$ whenever $[\calD_i]_{\pi(i)} < [\calD_{j}]_{\pi(i)},\;\forall j>i$.
\end{definition}

An example of DL vectors is illustrated in Fig. \ref{fig:PMI}. A PMI sequence of length five can be constructed as


{\small
\begin{equation}
\label{eq:PMIexample}
\calD =
\left[
\left[
\begin{array}{c}
\mathbf{0}\\  3
\end{array}
\right]
,
\left[
\begin{array}{c}
1 \\ \mathbf{0} 
\end{array}
\right]
,
\left[
\begin{array}{c}
\mathbf{1} \\ 4
\end{array}
\right]
,
\left[
\begin{array}{c}
2\\ \mathbf{1}
\end{array}
\right]
,
\left[
\begin{array}{c}
\mathbf{2} \\ 2
\end{array}
\right]
\right].
\end{equation}
}
Indices of bold values in \eqref{eq:PMIexample} are the coordinates, $\pi(i)$, at which the corresponding distance-to-leaders vectors are satisfying the PMI property.

\begin{figure}[htb]
\centering
\includegraphics[scale=0.25]{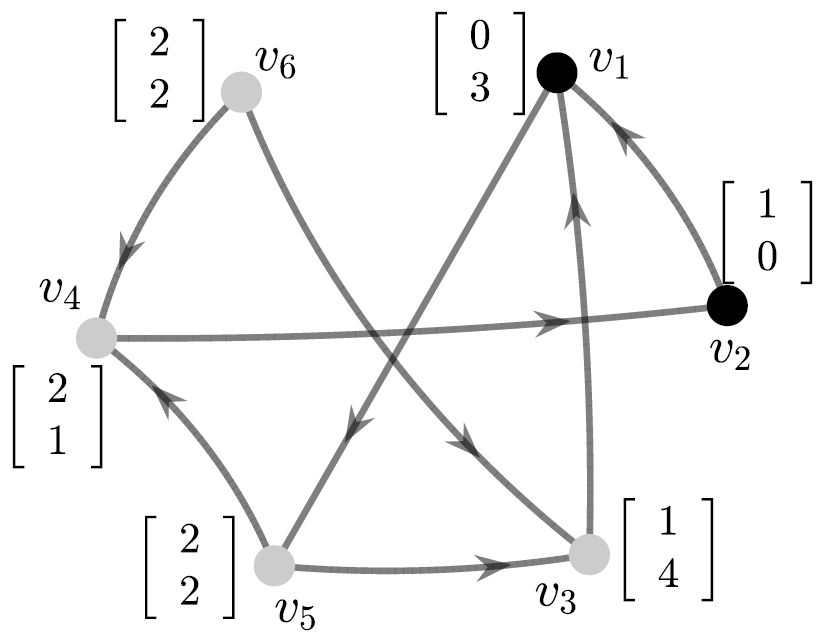}
\caption{A network with two leaders, $V_\ell = \{v_1,v_2\}$, and the corresponding distance-to-leaders (DL) vectors.
}
\label{fig:PMI}
\end{figure}
The longest PMI sequence of DL vectors is related to the dimension of SSCS as stated in the following result.
\begin{theorem} \cite{yazicioglu2016graph}
\label{thm:PMI}
Consider any network $G=(V,E)$ with the leaders $V_\ell \subseteq V$. Let $\delta(G,V_\ell)$ be the length of the longest PMI sequence of distance-to-leaders vectors with at least one finite entry.  Then, $\delta(G,V_\ell) \le \gamma(G,V_\ell)$.
\end{theorem}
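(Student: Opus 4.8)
Since $\gamma(G,V_\ell)=\min_w \operatorname{rank}(\Gamma(L_w,V_\ell))$, it suffices to exhibit, for \emph{every} admissible weight function $w$ as in \eqref{eq:weights}, a $\delta\times\delta$ submatrix of $\Gamma(L_w,V_\ell)$ whose determinant is nonzero, where $\delta=\delta(G,V_\ell)$. The plan is to read off the rows and columns of this submatrix directly from a longest PMI sequence and to arrange them so that the submatrix becomes triangular with a nonvanishing diagonal, \emph{uniformly in} $w$; this uniformity is exactly what makes the bound a statement about strong structural controllability rather than merely generic controllability.

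The first ingredient I would set up is the combinatorial meaning of the entries of powers of $-L_w$. Writing $-L_w=A_w-\mathrm{Deg}$, the $(u,v)$ entry of $(-L_w)^r$ is a sum over length-$r$ walks $u=w_0,w_1,\dots,w_r=v$ in which each step either traverses an edge of $E$ (contributing $w(w_t,w_{t+1})>0$) or stays in place (contributing the nonpositive diagonal term). Hence $[(-L_w)^r]_{u,v}=0$ whenever $r<d_G(u,v)$, while for $r=d_G(u,v)$ the only surviving walks are the shortest $u\!\to\!v$ paths, giving $[(-L_w)^{d_G(u,v)}]_{u,v}=\sum_{P}\prod_{e\in P}w(e)>0$, strictly positive for \emph{all} positive weightings. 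Note the column block of $\Gamma(L_w,V_\ell)$ indexed by the power $r$ and the leader $\ell_j$ has $v$-entry exactly $[(-L_w)^r]_{v,\ell_j}$.

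Next I would fix a longest PMI sequence of distance-to-leaders vectors $\calD_1,\dots,\calD_\delta$, realized by nodes $v_1,\dots,v_\delta$, with $\calD_i$ satisfying the PMI property at coordinate $\pi(i)$; set $r_i=[\calD_i]_{\pi(i)}=d_G(v_i,\ell_{\pi(i)})$. Because for each $i<\delta$ the PMI inequality $[\calD_i]_{\pi(i)}<[\calD_j]_{\pi(i)}$ must hold for some $j>i$, the coordinate $\pi(i)$ is necessarily finite, and the hypothesis ``at least one finite entry'' lets us also take $\pi(\delta)$ finite, so every $r_i$ is finite. Form the $\delta\times\delta$ matrix $M$ with row $i$ indexed by $v_i$ and column $i$ indexed by the pair (power $r_i$, leader $\ell_{\pi(i)}$), so $M_{ij}=[(-L_w)^{r_j}]_{v_i,\ell_{\pi(j)}}$. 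For $j>i$ the PMI property at coordinate $\pi(i)$ gives $r_i=d_G(v_i,\ell_{\pi(i)})<d_G(v_j,\ell_{\pi(i)})$, so by the walk count argument $M_{ji}=[(-L_w)^{r_i}]_{v_j,\ell_{\pi(i)}}=0$: the matrix is upper triangular. The same strict inequalities force the chosen columns to be pairwise distinct (if $\pi(i)=\pi(j)$ and $r_i=r_j$ for $i<j$, the PMI inequality at $\pi(i)$ is violated) and likewise the DL vectors, hence the rows, are distinct, so $M$ is genuinely a submatrix of $\Gamma(L_w,V_\ell)$. Finally $M_{ii}=[(-L_w)^{r_i}]_{v_i,\ell_{\pi(i)}}>0$ for every admissible $w$, so $\det M=\prod_i M_{ii}\neq 0$, giving $\operatorname{rank}(\Gamma(L_w,V_\ell))\ge\delta$ for all $w$ and therefore $\delta(G,V_\ell)\le\gamma(G,V_\ell)$.

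The step I expect to require the most care is the bookkeeping that makes $M$ triangular: correctly matching the ``row'' (a node) and ``column'' (a (power, leader) pair) labels so that the PMI inequalities land strictly below the diagonal, verifying that the selected columns are distinct (so that we really have a $\delta\times\delta$ minor and not a degenerate one), and handling the role of infinite distance entries so that each $\pi(i)$ is a coordinate at which $\calD_i$ is finite. The walk-sum identity and the positivity of the diagonal entries are then routine, but it is worth stating them carefully since the uniform-in-$w$ positivity is precisely what upgrades the conclusion from generic to strong structural controllability.
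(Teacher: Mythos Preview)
The paper does not prove this theorem; it is quoted from \cite{yazicioglu2016graph} and used as a black box. Your argument is essentially the proof given in that reference: from a longest PMI sequence realized by nodes $v_1,\dots,v_\delta$ with witnesses $\pi(i)$ and $r_i=[\calD_i]_{\pi(i)}$, select the rows $v_1,\dots,v_\delta$ and the columns $(r_i,\ell_{\pi(i)})$ of $\Gamma(L_w,V_\ell)$, observe that $[(-L_w)^r]_{v,\ell}=0$ for $r<d_G(v,\ell)$ and is a strictly positive sum of shortest-path weight products when $r=d_G(v,\ell)$, and conclude that the resulting minor is upper triangular with positive diagonal for every admissible $w$. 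Your bookkeeping on distinctness of the selected rows and columns and on the finiteness of each $r_i$ (including the last one, via the ``at least one finite entry'' hypothesis) is correct. One cosmetic point: where you write that the PMI inequality ``must hold for some $j>i$,'' the definition actually requires it for \emph{all} $j>i$; the existential version is what you need for finiteness of $r_i$, and you correctly invoke the universal version in the triangularity step, so nothing is lost.
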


\begin{remark}
While the bound in Theorem \ref{thm:PMI} was presented for connected undirected graphs in \cite[Theorem~3.2]{yazicioglu2016graph}, it also holds for any choice of leaders on strongly connected directed graphs as shown in \cite[Remark 3.1]{yazicioglu2016graph}. Such connectivity properties already ensure that all DL vectors have only finite entries. The bound can be extended easily to directed graphs without strong connectivity by excluding the DL vectors with all $\infty$ entries. 
\end{remark}


\subsection{Adding Edges While Preserving Node Distances}
Let $G = (V,E)$ be a graph with a leader set $V_\ell$, $\calD$ be a PMI sequence of length $\delta$, and $\tilde{V}\subseteq V$ be the set of nodes whose DL vectors are included in $\calD$. If we add edges in $G$ to obtain a new \textcolor{blue}{graph} $G' = (V,E')$ such that the DL vectors of nodes in $\tilde{V}$ remain the same in $G'$, then $\calD$ will also be a PMI sequence of $G'$ and $\delta\le\gamma(G',V_\ell)$. Therefore, one approach to augment edges in a graph while preserving a bound on the dimension of SSCS is to ensure that the distances from a certain set of nodes to leaders do not change due to edge additions. In this direction, we first need to study the maximal edge augmentation in a graph while preserving the distance from a given node $a$ to another node $b$. 

\begin{definition}{(\emph{Distance Preserving Edge Augmentation (DPEA) Problem)}} 
\label{def:DPEA}
Given a directed graph $G=(V,E)$ and nodes $a,b\in V$ such that $d_G(a,b)= k$, find a graph $G'(V,E')$ with the (same) node set $V$ and an edge set $E'\supseteq E$ such that $d_{G'}(a,b) = k$ and $|E'|$ is maximized. 
\end{definition}

Next, we characterize optimal solutions of the DPEA problem for a given node pair $(a,b)$ in $G$. We show that the optimal solution belongs to a special class of graphs, which is obtained by the union of \emph{clique chains} and \emph{modified clique chains} described below.

\begin{definition}{(\emph{Directed clique chain)}}
\label{def:CliqueChains}
A directed graph $\calC_k=(V,E)$ is a directed clique chain if the node set $V$ can be partitioned into sets $V_0,V_1,V_2,\ldots,V_k$, such that there is an edge from every node in $V_i$ to every node in $V_{i-1}\cup  V_{i} \cup V_{i+1}$ for all $1\le i\le k-1$. Moreover, nodes in each of $V_0$ and $V_k$ induce cliques.\footnote{In a clique, there is an edge between every pair of nodes.}
\end{definition}

\begin{definition}{(\emph{Directed modified clique chain)}}
\label{def:Other_graph}
A directed graph $\mathcal{M}_k=(V,E)$ is a directed modified clique chain if the node set $V$ can be partitioned into sets $V_0,V_1,V_2,\ldots,V_k$, such that there is an edge from every node in $V_i$ to every node in $V_{j}$ for all $j\le i$.
\end{definition}

Examples of directed clique chain and directed modified clique chain are shown in Figure \ref{fig:CCOther}. 

\begin{figure}[htb]
    \centering
    \begin{subfigure}[b]{0.22\textwidth}
	\centering
	\includegraphics[scale=0.61]{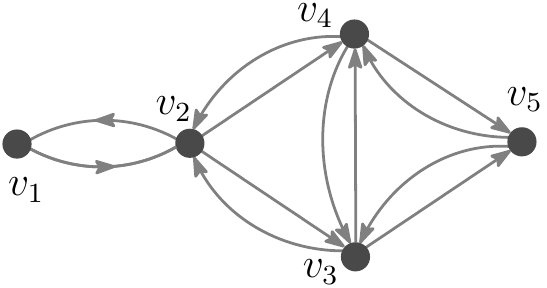}
	\caption{$\calC_3$}
	\end{subfigure}
	\begin{subfigure}[b]{0.22\textwidth}
	\centering
	\includegraphics[scale=0.6]{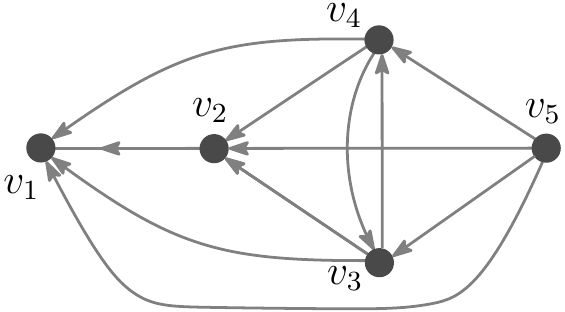}
	\caption{$\calM_3$}
	\end{subfigure}
    \caption{(a) Directed clique chain and (b) directed modified clique chain. The node set $V$ is partitioned into four subsets: $V_0 = \{v_1\}$, $V_1 = \{v_2\}$, $V_2 = \{v_3, v_4\}$ and $V_3 = \{v_5\}$.}
    \label{fig:CCOther}
\end{figure}

\begin{theorem}
	\label{thm:directed_augmented}
	Let $G=(V,E)$ be a directed graph and let $a,b$ be two fixed vertices in $G$ with $d_G(a,b) = k$. Then, an augmented graph $G'=(V,E'),E'\supseteq E$, that preserves the distance from $a$ to $b$, and contains the maximum number of edges is a union of a directed clique chain $\calC_k$  and a directed modified clique chain $\mathcal{M}_k$ for some partition $V_0=\{a\},V_1,V_2,\ldots,V_k=\{b\}$ of the node set $V$. 
\end{theorem}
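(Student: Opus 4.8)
The plan is to start from an arbitrary maximum‑edge feasible solution $G^{*}=(V,E^{*})$ (one exists, since there are only finitely many digraphs on $V$) and show that its edge set is forced into the claimed layered form. Write $\ell(v)=d_{G^{*}}(a,v)$. The first and main step is to prove that $\ell$ takes all of its values in $\{0,1,\dots,k\}$, with $\ell(v)=k$ only for $v=b$; by reversing every edge and swapping $a$ and $b$, the symmetric statement that $d_{G^{*}}(v,b)\le k-1$ for all $v\neq a$ follows as well. The mechanism is an edge‑insertion (exchange) argument: assuming some $v\neq b$ had $\ell(v)\ge k$ (possibly $\ell(v)=\infty$), I would produce an edge $(w,v)\notin E^{*}$ whose insertion does not shorten the distance from $a$ to $b$, contradicting maximality of $|E^{*}|$. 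If $v$ cannot reach $b$ in $G^{*}$, then $(a,v)$ works. If $v$ reaches $b$ with $d_{G^{*}}(v,b)=j\ge 1$, then the shortest $a$‑$b$ path of $G^{*}$ (of length $k$) meets every level $0,\dots,k$, so it contains a vertex $w$ with $\ell(w)=\max\{0,\,k-1-j\}$; inserting $(w,v)$ leaves $d_{G^{*}}(a,w)$ and $d_{G^{*}}(v,b)$ unchanged (a walk exploiting the new edge would have to repeat $w$ or $v$), so the new $a$‑$b$ distance is $\min\{k,\ell(w)+1+j\}=k$, while $(w,v)\notin E^{*}$ because otherwise $\ell(v)\le\ell(w)+1\le k-j\le k-1$. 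Consequently the sets $V_i:=\ell^{-1}(i)$, $i=0,\dots,k$, partition $V$ with $V_0=\{a\}$ and $V_k=\{b\}$, and each $V_i$ is nonempty (again since the shortest $a$‑$b$ path meets every level).

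The second step pins down $E^{*}$ exactly. On one hand, every $(u,v)\in E^{*}$ satisfies $\ell(v)\le\ell(u)+1$, so $E^{*}$ contains no edge that jumps forward by two or more levels. On the other hand, for any non‑loop pair $(u,v)$ with $\ell(v)\le\ell(u)+1$, if $(u,v)\notin E^{*}$ then inserting it keeps the $a$‑$b$ distance at $k$: any shortest $a$‑$b$ walk using $(u,v)$ has length at least $d_{G^{*}}(a,u)+1+d_{G^{*}}(v,b)\ge\ell(u)+1+(k-\ell(v))\ge k$, where I use $\ell(v)+d_{G^{*}}(v,b)\ge d_{G^{*}}(a,b)=k$ and the fact that adding $(u,v)$ creates no shorter route into $u$ nor out of $v$. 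This contradicts maximality, so $E^{*}=\{(u,v):u\neq v,\ \ell(v)\le\ell(u)+1\}$, which is precisely: all edges inside each $V_i$, all edges from $V_i$ to $V_{i+1}$, and all edges from $V_i$ to $V_j$ with $j<i$.

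The third step is bookkeeping: this edge set is exactly the union of the directed clique chain $\mathcal{C}_k$ (the within‑block cliques together with the adjacent‑block connections) and the directed modified clique chain $\mathcal{M}_k$ (all connections from higher‑indexed to lower‑indexed blocks) over the partition $V_0=\{a\},V_1,\dots,V_k=\{b\}$, and this partition is automatically consistent with $G$ since $E\subseteq E^{*}$. The main obstacle is Step~1: discharging every case of the edge‑insertion argument — vertices unreachable from $a$, vertices that cannot reach $b$, infinite distances, and the small‑$k$ degenerate cases — and in particular checking in each case that the inserted edge never produces an $a$‑$b$ walk shorter than $k$, which rests on the observation that a walk through a freshly added edge $(x,y)$ cannot yield a shortcut into its tail $x$ or out of its head $y$. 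Steps~2 and~3 are then short.
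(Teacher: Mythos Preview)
Your proposal is correct and follows essentially the same strategy as the paper: partition $V$ into level sets $V_i=\{v:d_{G'}(a,v)=i\}$, use maximality via edge-insertion arguments to show every vertex lands in some $V_i$ with $0\le i\le k$, and then characterize $E'$ by noting that edges within a level, between adjacent levels, and all backward edges must be present while forward jumps of two or more levels are impossible. Your treatment is more careful about the corner cases (unreachable vertices, infinite distances, small $k$) than the paper's somewhat terse argument, but the underlying idea and the three-case edge classification are identical.
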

\begin{proof} Let $d_G(a,b) = k$. Then, it is clear that for all vertices $v\in V$, $d_{G'}(a,v)\le k$ because otherwise we can add an edge from an arbitrary node at distance $\le k-1$ to $v$.
Also, an arbitrary vertex $v$ in $G'$ at distance $i$ from $a$, we may assume that $d_{G'}(v,b) = k - i$. Clearly, $d_{G'}(v,b)$ can not be less than $k - i$, and if it is more, we can add an edge from $v$ to a vertex $w$ where $d_{G'}(w,b) = k - i -1$. Thus, every vertex in $G'$ lies on a shortest path from $a$ to $b$.
Let $u,v$ be two vertices such that $d_{G'}(a,u) = i$, and $d_{G'}(a,v) = j$. We have the following cases:
\begin{enumerate}
	\item $i=j$ or $i=j-1$ : an edge from $u$ to $v$ doesn't change the distance from $a$ to $b$. Thus, we may assume that all such edges exist in $G'$.
	\item $i < j-1$: since both of $u,v$ lie on a shortest path from $a$ to $b$, we know that $d_{G'}(v,b) = k-j$. So an edge $u,v$ will create a path from $a$ to $b$ of distance $\le j-2 +1 + k -j = k-1$. This is a contradiction to the fact that $G'$ preserves $a,b$ distance. Therefore, $G'$ doesn't contain any edge of the form $u,v$.
	\item $i>j$: in this case, an edge from $u$ to $v$ doesn't create any new shortest paths from $a$ to $b$. Thus, we may assume that all such edges exist in $G'$.
\end{enumerate}
Let us define $V_0,V_1,\ldots,V_k$ a partition of $V$ where $V_i = \{v\in V: d_{G'}(a,v) = i \} $. It is clear that with this partition, the graph $G$ is a union of a directed clique chain and a directed modified clique chain. This provides a complete characterization of edges in $G'$ and completes the proof.
\end{proof}



From Theorem~\ref{thm:directed_augmented}, we obtain a simple way to greedily construct a maximally dense graph that preserves the distance from node $a$ to node $b$ in a given graph $G$ as follows:

\begin{algorithm}
\caption{Distance Preserving Edge Augmentation (DPEA)}
\label{algo:DPEA}
{\small
\begin{algorithmic}[1]
\State \textbf{Given} $G = (V,E)$, $a,b\in V$. 
\State \textbf{Initialize} $E' \gets E$, $G' \gets (V,E')$
\State \textbf{While} $\exists$ $(u,v)\in V\times V$ with $d_{G'}(a,u) \ge d_{G'}(a,v) -1$
\State 
$
E' \gets E' \cup\{(u,v)\}
$, $G' \gets (V,E')$.
\State  \textbf{End While}

\State \textbf{Return} $G' = (V,E')$
\end{algorithmic}
}
\end{algorithm}
An example of the DPEA is shown in Figure \ref{fig:DPEA} in which edges are augmented to preserve the distance from node $a=v_1$ to node $b = v_5$.

\begin{figure}[htb]
    \centering
    \begin{subfigure}[b]{0.239\textwidth}
	\centering
	\includegraphics[scale=0.64]{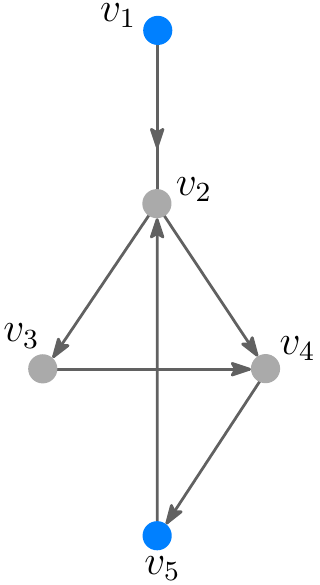}
	\caption{$G$}
	\end{subfigure}
	\begin{subfigure}[b]{0.239\textwidth}
	\centering
	\includegraphics[scale=0.64]{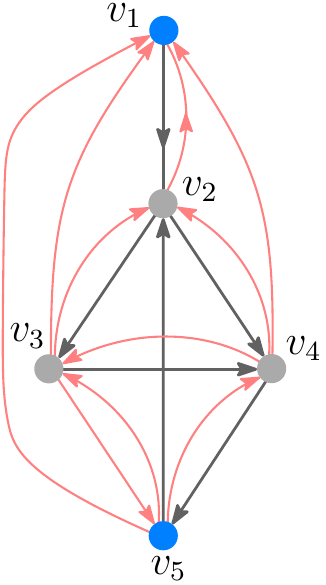}
	\caption{$G'$}
	\end{subfigure}
    \caption{Distance preserving edge augmentation for the node pair $(v_1,v_5)$. In (b), red edges are the ones that are added as a result of Algorithm \ref{algo:DPEA}.}
    \label{fig:DPEA}
\end{figure}
\textcolor{blue}{We note that the time complexity of Algorithm \ref{algo:DPEA} is $O(n^2\log n)$ because we can implement this using runs of Dijkstra's shortest path algorithm}. We can use DPEA to add edges in a graph $G(V,E)$ with a leader set $V_\ell$ while preserving a bound on the dimension of SSCS. Let $\calD$ be a PMI sequence of length $\delta$ and $\tilde{V}\subseteq V$ be the set of nodes whose DL vectors are included in $\calD$. By solving the DPEA problem for the node pair $(v,l)$, where $v\in\tilde{V}$ and $l\in V_\ell$, we can obtain edges, say $E_{v,l}$, whose addition to the graph will preserve the distance from node $v$ to leader $l$. By solving DPEA problem for all node pairs $(v,l)$, where $v\in\tilde{V}$ and $l\in V_\ell$, we can obtain edges that are common in all solutions, that is, $E_{comm}= \cap_{v\in\tilde{V},l\in V_\ell}E_{v,l}$. By adding these common edges in the given graph $G$, we obtain a new graph $G' = (V,E\cup E_{comm})$ such that $d_G(v,l) = d_{G'}(v,l)$, $\forall v\in \tilde{V}$ and $\forall l \in V_\ell$. Consequently, $\calD$ will also be PMI sequence of $G'$, which means that the dimension of SSCS in the augmented graph $G'$ will also be at least $\delta$. Thus, by solving multiple instances of DPEA problem, we can determine edges whose addition to the graph will preserve the distance-based bound on the dimension of SSCS.  

\textcolor{blue}{
Since the problem of finding an optimal set of edges while preserving the distance-based bound is computationally challenging, in the next subsection, we present and analyze a \emph{randomized} edge augmentation algorithm that is simple to understand, easy to implement, and offers good numerical results in our experiments.}
\subsection{Randomized Edge Augmentation Algorithm Preserving the Distance-based Bound}
First, we discuss that for a given $G$, if $\calD$ is a PMI sequence of length $\delta$ containing DL vectors of nodes $\tilde{V}\subseteq V$, then we can add edges in $G$ to obtain an augmented graph $G'$ with the same leader set $V_\ell$. Moreover $G'$ will also have a PMI sequence of length $\delta$ even if $d_G(v,l) \ne d_{G'}(v,l)$, $\forall v\in \tilde{V}, \forall l \in V_\ell$. In particular, for every node pair $(v,l)$, where $v\in\tilde{V}$, $l\in V_\ell$, there exists an integer $\epsilon_{v,l}\in \mathbb{Z}$ such that if $\epsilon_{v,l} < d_{G'}(v,l)\le d_G(v,l)$, then $G'$ will have a PMI sequence consisting of DL vectors of nodes in $\tilde{V}$ and having a length $\delta$. 
We will then provide a randomized edge augmentation algorithm satisfying conditions to preserve the distance-based bound on the dimension of SSCS. Finally, we will analyze the performance of the algorithm.

Let $\calD_i$ be the $i^{th}$ vector in the PMI sequence $\calD$ of the given graph $G$. Then by the definition of PMI sequence, for each element $[\calD_i]_j$, there is an integer, say $\epsilon_{i,j}$, such that $[\calD_i]_j > \epsilon_{i,j}$. We denote the maximum possible value of $\epsilon_{i,j}$ by $\epsilon^\ast_{i,j}$ and define $\epsilon^\ast_i := \left[\begin{array}{rrrr}\epsilon_{i,1}^\ast & \epsilon_{i,2}^\ast &\cdots & \epsilon_{i,|V_\ell|}^\ast\end{array}\right]^T$. For instance, consider the PMI sequence in \eqref{eq:PMIexample}. The vectors $\calD_i$ in the sequence and the corresponding $\epsilon^\ast_i$ are given below.

{\small
\begin{equation*}
\begin{split}
& \calD_1 = \left[\begin{array}{c} 0 \\ 3
\end{array}\right], \;\epsilon^\ast_1 = \left[\begin{array}{c}
-1 \\ -1 \end{array}\right];\;\;
\calD_2 = \left[\begin{array}{c} 1 \\ 0
\end{array}\right], \;\epsilon^\ast_2 = \left[\begin{array}{c}
0 \\ -1 \end{array}\right];\\
& \calD_3 = \left[\begin{array}{c} 1 \\ 4
\end{array}\right], \;\epsilon^\ast_3 = \left[\begin{array}{c}
0 \\ 0 \end{array}\right];\;\;
\calD_4 = \left[\begin{array}{c} 2 \\ 1
\end{array}\right], \;\epsilon^\ast_4 = \left[\begin{array}{c}
1 \\ 0 \end{array}\right];\\
& \calD_5 = \left[\begin{array}{c} 2 \\ 2
\end{array}\right], \;\epsilon^\ast_5 = \left[\begin{array}{c}
1 \\ 1 \end{array}\right];\\
\end{split}
\end{equation*}
}

We can think of $\epsilon^\ast_{i,j}$ as a strict lower bound on $[\calD_i]_j$. From a given PMI sequence of length $\delta$, we can always obtain $\epsilon^\ast_{i}$ for all $i\in\{1,2,\cdots,\delta\}$. 

\begin{observation}
In a PMI sequence $\calD$, if we replace  $[\calD_i]_j$ by some integer $x$, where $\epsilon^\ast_{i,j}< x \le [\calD_i]_j$, then the resulting sequence will still be a PMI sequence.
\end{observation}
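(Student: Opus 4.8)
The plan is to prove the Observation by a short case analysis, using as the PMI witness for the modified sequence the \emph{same} witness function $\pi$ that certifies $\calD$ is PMI (and with respect to which the thresholds $\epsilon^\ast_{i,j}$ were defined). Write $\calD'$ for the sequence obtained from $\calD$ by replacing the entry $[\calD_i]_j$ by an integer $x$ with $\epsilon^\ast_{i,j} < x \le [\calD_i]_j$. Since $\calD'$ and $\calD$ agree in every coordinate of every vector except coordinate $j$ of the $i$-th vector, and that single entry was only decreased, the only instances of the inequality \eqref{eq:PMIcondition} that can possibly fail in $\calD'$ are those in which the modified entry $[\calD_i]_j$ itself appears; every other instance of \eqref{eq:PMIcondition} is literally the same statement in $\calD$ and in $\calD'$, hence still holds.

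First I would enumerate the occurrences of the modified entry $[\calD_i]_j$ in the PMI inequalities. There are exactly two families. Family~(b): the inequality $[\calD_i]_{\pi(i)} < [\calD_{k'}]_{\pi(i)}$ for $k' > i$, which involves the modified entry only when $\pi(i) = j$, and then on the left-hand side. Family~(a): inequalities $[\calD_k]_{\pi(k)} < [\calD_i]_{\pi(k)}$ with $k < i$ and $\pi(k) = j$, in which the modified entry is on the right-hand side. Family~(b) is dispatched at once: $x \le [\calD_i]_j < [\calD_{k'}]_{j}$, the last inequality holding by the PMI property of $\calD$ at coordinate $\pi(i)=j$, and since only the $i$-th vector was changed, $[\calD'_{k'}]_{\pi(i)} = [\calD_{k'}]_j$; thus decreasing the left side preserves the strict inequality.

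The substantive case is Family~(a), and this is precisely where the bound $\epsilon^\ast_{i,j}$ is used. By construction $\epsilon^\ast_{i,j}$ is the tightest strict lower bound on $[\calD_i]_j$ that is forced by the portion of the sequence preceding position $i$; concretely, $\epsilon^\ast_{i,j} = \max\{[\calD_k]_j : k < i,\ \pi(k) = j\}$ (with a default value such as $-1$ when there is no such $k$), so in particular $[\calD_k]_j \le \epsilon^\ast_{i,j}$ for every $k < i$ with $\pi(k) = j$. Combining this with $\epsilon^\ast_{i,j} < x$ gives $[\calD_k]_{\pi(k)} = [\calD_k]_j \le \epsilon^\ast_{i,j} < x = [\calD'_i]_{\pi(k)}$, and in particular $[\calD_k]_{\pi(k)} < [\calD'_i]_{\pi(k)}$, which is exactly the (strict) inequality needed for $\calD_k$ to keep satisfying the PMI property at coordinate $\pi(k) = j$ in $\calD'$. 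Having verified both families, $\pi$ witnesses that $\calD'$ is a PMI sequence, which is the claim; note also that $x$ is a finite integer, so if $j$ carried the unique finite entry of $\calD_i$ the modified vector still has a finite entry.

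The only real obstacle here is definitional bookkeeping rather than mathematics: one must hold the witness function $\pi$ fixed throughout (so that ``satisfies the PMI property at coordinate $\pi(k)$'' refers to the same coordinate in $\calD$ and in $\calD'$), and one must use the correct reading of $\epsilon^\ast_{i,j}$ as the largest strict lower bound on $[\calD_i]_j$ still imposed by the earlier vectors that rely on coordinate $j$. Once those two points are pinned down, the argument requires no estimates beyond the trivial chain $[\calD_k]_j \le \epsilon^\ast_{i,j} < x$.
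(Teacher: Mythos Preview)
Your argument is correct. The paper itself does not supply a proof of this Observation; it merely states it and follows with an illustrative example (the sequence $\bar{\calD}$), relying on the informal remark that ``$\epsilon^\ast_{i,j}$ is a strict lower bound on $[\calD_i]_j$.'' Your case analysis makes this precise: you correctly identify that the only PMI inequalities that can be affected are those in which the modified entry appears either on the left (Family~(b), handled by $x\le[\calD_i]_j$) or on the right (Family~(a), handled by the definition of $\epsilon^\ast_{i,j}$ as $\max\{[\calD_k]_j:k<i,\ \pi(k)=j\}$, which your computation against the paper's numerical example confirms). Keeping the same witness $\pi$ fixed is exactly the right move, and with that choice the verification is immediate. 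In short, you have supplied the formal justification the paper omits, along the lines the paper implicitly intends.
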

For instance, consider the sequence $\bar{\calD}$ below, which is obtained from $\calD$ in \eqref{eq:PMIexample} by replacing $[\calD_1]_2 = 3$ and $[\calD_3]_2 = 4$ by $[\bar{\calD}_1]_2 = 1$ and $[\bar{\calD}_3]_2 = 1$, respectively. Since $\epsilon^\ast_{1,2} = -1$ and  $\epsilon^\ast_{3,2} = 0$, the resulting sequence $\bar{\calD}$ is a PMI sequence.

{\small
\begin{equation}
\label{eq:PMI_new_seq}
\bar{\calD} = \left[
\begin{array}{ccccc}
\left[\begin{array}{c} 0 \\ 1\end{array} \right],
\left[\begin{array}{c} 1 \\ 0\end{array} \right],
\left[\begin{array}{c} 1 \\ 1\end{array} \right],
\left[\begin{array}{c} 2 \\ 1\end{array} \right],
\left[\begin{array}{c} 2 \\ 2\end{array} \right]
\end{array}
\right].
\end{equation}
}

Next, we present a randomized algorithm to add edges in a graph $G$ with a leader set $V_\ell$. $G$ has a PMI sequence $\calD$ of length $\delta$ consisting of DL vectors of nodes $\tilde{V}\subseteq V$. Moreover, let $S =\left[ \begin{array}{llllll}s_1&s_2&\cdots &s_\delta\end{array}\right]$, where $s_i$ is the index of the node whose DL vector is the $i^{th}$ element (vector) in $\calD$. For instance, $\calD$ in \eqref{eq:PMIexample} is a PMI sequence of the graph in Figure \ref{fig:PMI}. The corresponding $\tilde{V}$ is $\{v_1,v_2,\cdots,v_5\}$ and $S$ is $\left[ \begin{array}{llllll} 1 & 2 & 3 & 4 & 5 \end{array}\right]$. For every $i^{th}$ vector in the sequence, the algorithm first computes the corresponding vector $\epsilon_i^\ast$, which basically provides minimum distances that need to be maintained from node $v_{s_i}$ to all leaders during the edge augmentation. The algorithm then selects a missing edge randomly and augments it to the graph if its addition does not violate distance conditions (line 8 in Algorithm \ref{algo:RA}), otherwise discards it. This process is repeated until all missing edges from the original graph are either added to the graph, or discarded. The details are outlined in Algorithm \ref{algo:RA}.
\begin{algorithm}
\caption{Randomized Algorithm for the Distance-based Edge Augmentation}
\label{algo:RA}
{\small
\begin{algorithmic}[1]
\State \textbf{Given} $G = (V,E)$, $V_\ell=\{\ell_1,\cdots,\ell_m\}$, $\mathcal{D}$, $S$.
\State \textbf{Initialize} $E' \gets E$
\State Compute $\epsilon^\ast_{i,j}$ for each element $[\calD_i]_j$ in $\calD$.
\State Compute $E^c$ (set of all missing edges).
\State \textbf{While} $E^c\ne \emptyset$
\State Randomly select $e\in E^c$, and obtain $H=(V,E'\cup\{e\})$.
\State Compute $d_H(v_{s_i},\ell_j)$ for all $j\in \{1,\cdots,m\}$ and for all $i\in \{1,\cdots,|S|\}$.  
\State If $(\epsilon^\ast_{i,j} < d_H(v_{s_i},\ell_j) \le [\calD_i]_j$ for all $j\in \{1,\cdots,m\}$ and for all $i\in \{1,\cdots,|S|\}$, then
$
E' \gets E' \cup\{e\}.
$
\State Update $E^c\gets E^c\setminus\{e\}$.
\State  \textbf{End While}
\State \textbf{Return} $E'$
\end{algorithmic}
}
\end{algorithm}

Figure \ref{fig:PMI_Augmentation_Algo}(a) illustrates an example of edge augmentation due to Algorithm \ref{algo:RA} with the graph $G$ in Figure \ref{fig:PMI} and a PMI sequence of length $4$ as inputs. The red edges indicate the augmented edges. We observe that the graph $G'$ in Figure~\ref{fig:PMI_Augmentation_Algo}(a) contains a total of 29 edges compared to the graph in Figure~\ref{fig:ZFS_Augmentation_Algo}(b), which contains a total of 25 edges after the ZF-based edge augmentation.

\begin{figure}[htb]
    \centering
    \begin{subfigure}[b]{0.239\textwidth}
	\centering
	\includegraphics[scale=0.21]{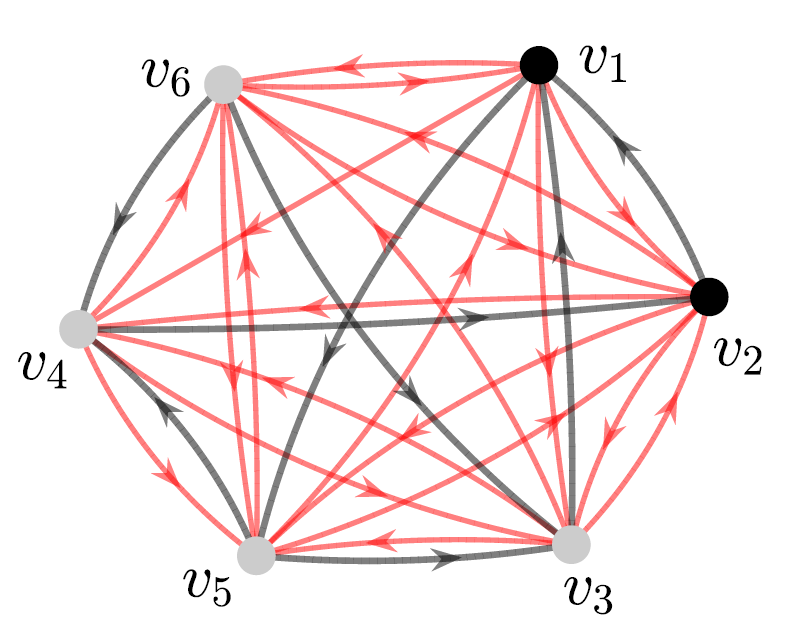}
	\caption{$G'$}
	\end{subfigure}
	\begin{subfigure}[b]{0.24\textwidth}
	\centering
	\includegraphics[scale=0.21]{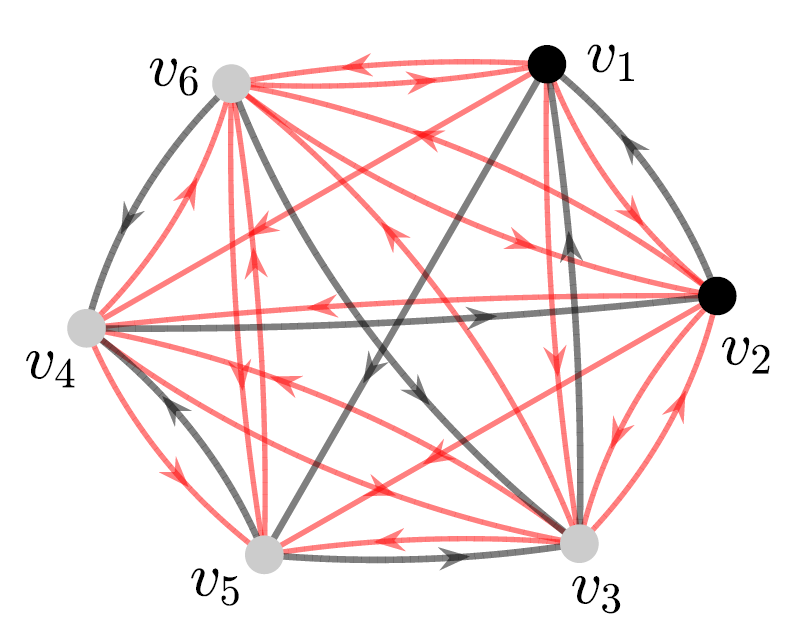}
	\caption{$G''$}
	\end{subfigure}
    \caption{Distance-based edge augmentation due to Algorithm~\ref{algo:RA}. (a) $G'$ is the augmented graph when PMI sequence of length 4 consisting of DL vectors of nodes $v_1,\cdots, v_4$ is the input to the algorithm. (b) $G''$ is the augmented graph with a PMI sequence of length 5 (given in \eqref{eq:PMIexample}) as the input.}
    \label{fig:PMI_Augmentation_Algo}
\end{figure}

\begin{remark}
We note that the distance-based bound on the dimension of SSCS is typically better than the ZF-based bound, especially when the graph is not SSC \cite{YasinCDC2020}. Thus, edge augmentation using Algorithm \ref{algo:RA} allows to add edges while preserving a better bound on the dimension of SSCS. For instance, the distance-based bound on the dimension of SSCS in $G$ is 5, which is better than the ZF-based bound. We see that using Algorithm \ref{algo:RA}, we can have a total of 27 edges in the augmented graph $G''$ (as illustrated in Figure~\ref{fig:PMI_Augmentation_Algo}(b)) while maintaining the dimension of SSCS to be 5, which is not possible by the edge augmentation using the ZF-based bound.
\end{remark}


Next, we analyze the performance of Algorithm \ref{algo:RA} by the following result.
\begin{prop}
If we repeat Algorithm~\ref{algo:RA} a constant $c$ number of times and then return the best graph among these iterations, then the returned graph is an $\alpha$--approximation with probability at least $
1 - e^{- c \left( \frac{OPT}{\beta}\right)^{\alpha\times OPT}}$,
where $OPT$ is the optimal number of edges and \textcolor{blue}{ $\beta$ is the number of edges that can each be added without changing the PMI.}
\end{prop}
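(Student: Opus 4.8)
The plan is to bound the success probability of a single run of Algorithm~\ref{algo:RA} and then amplify it over the $c$ independent repetitions. The first and main step is a structural observation: the feasibility test in line~8 is \emph{monotone} under edge addition. Indeed, in a directed graph inserting an edge never increases any pairwise distance, so for every $E'\supseteq E$ one automatically has $d_{(V,E')}(v_{s_i},\ell_j)\le d_G(v_{s_i},\ell_j)=[\calD_i]_j$; hence only the strict lower bounds $\epsilon^\ast_{i,j}<d(\cdot)$ can ever fail, and a lower bound that holds for a larger edge set still holds for every sub-collection. This yields two consequences I would record as a lemma: (i) \emph{downward closure} --- if $E\cup A$ passes the test, then so does $E\cup A'$ for every $A'\subseteq A$; and (ii) a missing edge that fails the test when added alone to $G$ also fails it when added to any $E'\supseteq E$, so such an edge is discarded by the algorithm no matter when it is drawn. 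Writing $\beta$ for the number of missing edges that individually pass the test (the ``good'' edges), (i) implies that every feasible added-edge set consists only of good edges, so in particular $OPT\le\beta$.

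Next I would analyze one run. Fix an optimal added-edge set $A^\ast$ with $|A^\ast|=OPT$; by the above it consists of good edges. Since the algorithm draws the missing edges uniformly at random without replacement, the good edges appear in a uniformly random relative order. Let $\mathcal{E}$ be the event that the first $k:=\alpha\cdot OPT$ good edges in that order all lie in $A^\ast$. Under $\mathcal{E}$ a short induction shows that all $k$ of those edges are actually added: when the $j$-th of them is examined, the current edge set equals $E$ together with the previous $j-1$ of them (earlier good edges from $A^\ast$ were added, and every bad edge seen so far was discarded by (ii)), and since this set of $j$ edges is contained in $A^\ast$ and $E\cup A^\ast$ passes the test, downward closure (i) guarantees the test passes again, so the $j$-th edge is added. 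Hence under $\mathcal{E}$ the run outputs at least $\alpha\cdot OPT$ new edges, i.e.\ an $\alpha$--approximation. Finally, a uniformly random relative order puts some $k$-subset of $A^\ast$ first with probability $\binom{OPT}{k}/\binom{\beta}{k}=\prod_{i=0}^{k-1}\frac{OPT-i}{\beta-i}\ge(OPT/\beta)^{k}$, the last inequality because each factor is at least $OPT/\beta$ when $\beta\ge OPT$; so a single run succeeds with probability at least $(OPT/\beta)^{\alpha\cdot OPT}$.

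The $c$ runs are independent, so the probability that none of them produces an $\alpha$--approximation is at most $(1-p)^{c}\le e^{-cp}$ with $p=(OPT/\beta)^{\alpha\cdot OPT}$, and returning the best of the $c$ outputs therefore yields an $\alpha$--approximation with probability at least $1-e^{-c(OPT/\beta)^{\alpha\cdot OPT}}$. The main obstacle is not this calculation but the bookkeeping in the structural step: one must argue carefully that interspersed ``bad'' edges truly never affect the state, and one should state explicitly that $OPT$ denotes the largest number of edges addable while keeping the \emph{given} sequence $\calD$ a valid PMI sequence (equivalently, while keeping the line~8 conditions satisfiable), since that is the optimum the algorithm actually competes against; treating $\alpha\cdot OPT$ as an integer in the exponent is harmless. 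It is also worth noting that greedy selection on this downward-closed family has no deterministic guarantee on its own, which is exactly why the randomized bound keys on the chance that an \emph{entire} optimal solution is drawn before any other good edge.
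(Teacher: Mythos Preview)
The paper does not actually prove this proposition; it simply writes ``We omit the proof of this proposition, because it follows the same arguments given in [Proposition~4.2, AbbasACC2020] for a similar result on the undirected graphs.'' So there is no in-paper argument to compare against. Your overall strategy---establishing that the feasibility test is antitone in the edge set (hence the feasible added-edge sets form a downward-closed family), isolating the ``good'' edges, analyzing the induced random order on them, and then amplifying over $c$ independent runs---is exactly the kind of argument one would expect in the cited companion paper, and your structural observations (i) and (ii) are correct and well stated.

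There is, however, a concrete error in the probability step. You claim
\[
\Pr[\mathcal{E}]=\prod_{i=0}^{k-1}\frac{OPT-i}{\beta-i}\ \ge\ \left(\frac{OPT}{\beta}\right)^{k},
\]
arguing that each factor is at least $OPT/\beta$. The inequality goes the other way: for $\beta\ge OPT$ and $i\ge 1$ one has $(OPT-i)\beta\le OPT(\beta-i)$, hence $\tfrac{OPT-i}{\beta-i}\le \tfrac{OPT}{\beta}$ (e.g.\ $OPT=3$, $\beta=6$, $k=2$ gives $\tfrac{3}{6}\cdot\tfrac{2}{5}=\tfrac{1}{5}<\tfrac{1}{4}$). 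Thus your event $\mathcal{E}$ only yields $\Pr[\mathcal{E}]\le (OPT/\beta)^{\alpha\cdot OPT}$, which is the wrong direction to feed into $(1-p)^c\le e^{-cp}$. Since $\mathcal{E}$ is merely a sufficient condition for success, this does not falsify the proposition, but it does leave your proof with a genuine gap: you would need either a different sufficient event whose probability you can lower-bound by $(OPT/\beta)^{\alpha\cdot OPT}$, or to consult the companion paper to see precisely how that exponent is obtained (the form of the bound suggests a with-replacement style estimate rather than the hypergeometric one you computed).
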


We omit the proof of this proposition, because it follows the same arguments given in \cite[Proposition 4.2]{AbbasACC2020} for a similar result on the undirected graphs. \textcolor{blue}{Moreover, the time complexity of Algorithm \ref{algo:RA} is $O(|E|\times|E^c|\times m\log n)$, where $|E^c|$ is the number of edges not in the input graph. The term $O(|E|\times m\log n)$ is the cost of running Dijkstra's shortest path algorithm for all leader nodes and we do it for all of the missing edges $E^c$ in the randomized algorithm}.

\section{Numerical Evaluation}
\label{sec:Num_Eval}
We illustrate and compare the ZF-based and the distance-based edge augmentation algorithms on random directed networks with $N = 100$ nodes in which edge $(i,j)$ exists with probability $p=0.075$, $\forall i\ne j$. Each point in plots in Figure~\ref{fig:main_100} is an average of 30 randomly generated instances.

In Figure \ref{fig:main_100}(a), we plot ZF-based and distance-based bounds on the dimension of SSCS as a function of number of leaders, which are chosen randomly. The distance-based bound is better than the ZF-based, especially for a smaller number of leaders (as discussed in \cite{YasinCDC2020}). Figure \ref{fig:main_100}(b) plots the number of edges added in graphs as a function of number of (randomly selected) leaders. We note that Algorithms \ref{algo:augmention_zfs} and \ref{algo:RA} augment edges while preserving the ZF-based and distance-based bounds on the dimension of SSCS, respectively. For the same number of leaders, the number of edges augmented by Algorithm \ref{algo:augmention_zfs} is greater than the Algorithm \ref{algo:RA} because the controllability bound preserved by the ZF-based augmentation (Algorithm \ref{algo:augmention_zfs}) is smaller than the distance-based augmentation. The number of edges in the original graphs are also shown. Figure \ref{fig:main_100}(c) illustrates the result when we augment edges using Algorithms \ref{algo:augmention_zfs} and \ref{algo:RA} while preserving the same controllability bound, which is ZFS-based. 
\begin{figure}
    \centering
    \begin{subfigure}[b]{0.22\textwidth}
	\centering
	\includegraphics[scale=0.27]{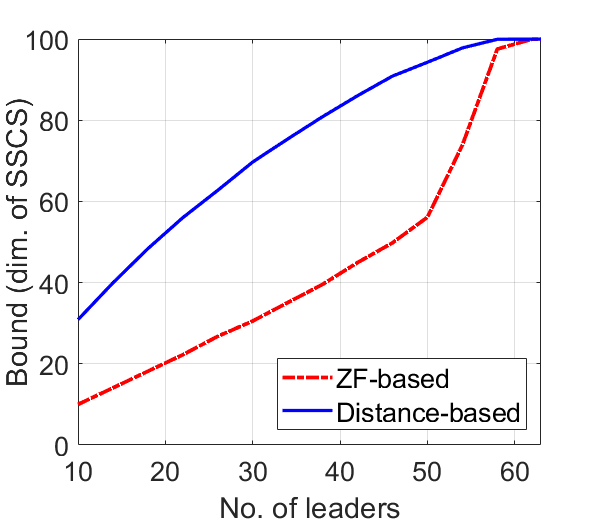}
	\caption{}
	\end{subfigure}
	 \begin{subfigure}[b]{0.22\textwidth}
	\centering
	\includegraphics[scale=0.2225]{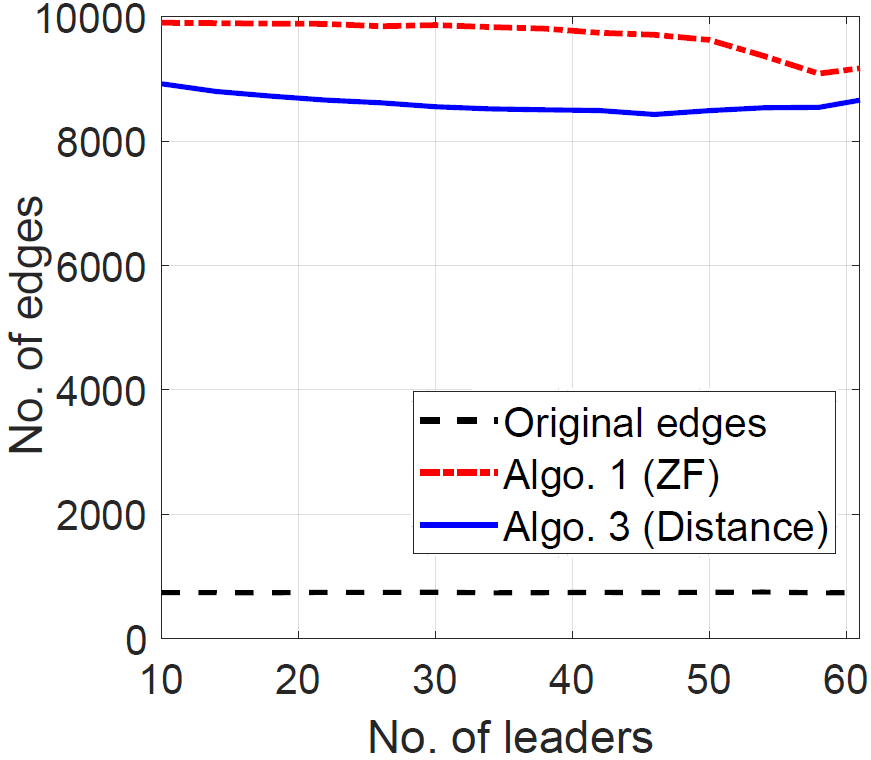}
	\caption{}
	\end{subfigure}
	\begin{subfigure}[b]{0.22\textwidth}
	\centering
	\includegraphics[scale=0.2775]{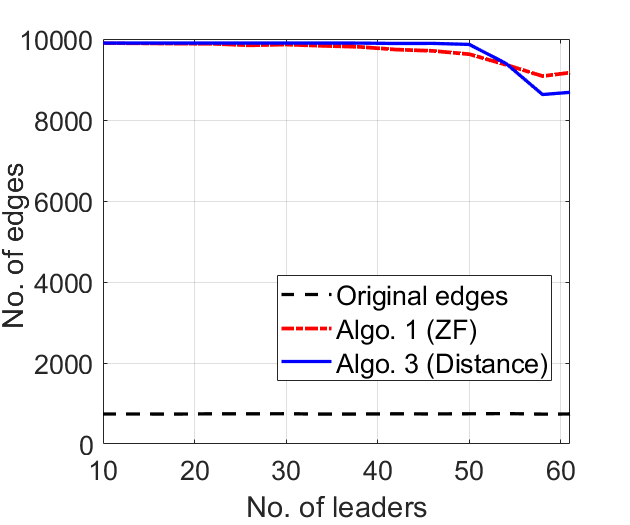}
	\caption{}
	\end{subfigure}
    \caption{(a) Lower bounds on the dimension of SSCS. (b)~Number of edges added by ZF-based and distance-based augmentation algorithms while preserving their respective bounds. (c) Number of edges added by Algorithms~\ref{algo:augmention_zfs}~and~\ref{algo:RA} while preserving the same (ZF-based) bound.}
    \label{fig:main_100}
\end{figure}
\section{Conclusion}
\label{sec:Con}
Network connectivity and robustness can be improved by augmenting extra links between nodes. However, adding new links may degrade the network's controllability. 
In this paper, we presented edge augmentation algorithms to add the maximum number of edges in a network while preserving the ZF-based bound and the distance-based bounds on the dimension of SSCS. When the bound on the dimension of SSCS to be preserved is smaller, a large number of edges can be augmented. 
\textcolor{blue}{Though we considered networks with Laplacian dynamic \eqref{eq:dynamics}, both the distance-based and the ZF-based methods are applicable to more generalized dynamics in the form $\dot{x}= Px + Bu$ (e.g., \cite{Van2017distance,monshizadeh2014zero}). In particular, the ZF-based bound holds for any such linear dynamics on a network where an edge $(v_i,v_j)$ denotes that the corresponding entry in the  system matrix, $P_{ij}$, is non-zero. In comparison, the distance-based method requires the system matrix, $P$, to be in a class of matrices called the \emph{distance-information-preserving matrices} (as explained in \cite{Van2017distance}), which contain the graph Laplacian as a special case.}
We aim to further explore the relation between network controllability and edge density to co-optimize robustness and controllability in the future.


\bibliographystyle{IEEEtran}
\bibliography{references}
\end{document}